\documentclass[10pt,journal]{IEEEtran}
\usepackage{amsfonts, amsmath, amsthm, amssymb}
\usepackage{mathtools, graphicx, epstopdf, tikz}
\usepackage{multirow}
\usepackage{cite}
\usepackage{subcaption}
\usepackage{atbegshi}
\usepackage{comment}

\IEEEoverridecommandlockouts

\newtheorem{proposition}{Proposition}

\usepackage{algorithm}
\usepackage[noend]{algpseudocode}

\newtheorem{remark}{Remark}

\allowdisplaybreaks
\usepackage{cite, graphicx, textcomp, multirow}
\usepackage{scalefnt}
\begin{document}

	\title{Constellation Design for Nonlinear Unified\\ SWIPT Receiver Channels with Memory\vspace{-2mm}}	
	\author{$\text{Triantafyllos Mavrovoltsos}$, \textit{Graduate Student Member, IEEE}, $\text{Elio Faddoul}$, \textit{Member, IEEE}, $\text{Zulqarnain Bin Ashraf}$, \textit{Graduate Student Member, IEEE}, $\text{Constantinos Psomas}$, \textit{Senior Member, IEEE}, $\text{Besma Smida}$, \textit{Senior Member, IEEE}, and $\text{Ioannis Krikidis}$, \textit{Fellow, IEEE} \thanks{T. Mavrovoltsos, E. Faddoul, and I. Krikidis are with the Department of Electrical and Computer Engineering, University of Cyprus, Nicosia, Cyprus (e-mail: \{tmavro03, efaddo01, krikidis\}@ucy.ac.cy).\\
			Z. B. Ashraf and B. Smida are with the Department of Electrical and Computer Engineering, University of Illinois Chicago, USA (email: \{zbinas2, smida\}@uic.edu).\\
			C. Psomas is with the Department of Computer Science and Engineering, European University Cyprus, Cyprus (email:  c.psomas@euc.ac.cy)} 
		
		\thanks{Part of this work was presented at the IEEE International Communication Conference (ICC), Glasgow, Scotland, May 2026 \cite{Conf}.}\vspace{-8mm}}
	\maketitle
	
	\begin{abstract}	
	Unified receivers (URs) have emerged as a promising architecture for simultaneous wireless information and power transfer (SWIPT), since a common rectifying front-end enables information decoding (ID) and energy harvesting (EH) from the same rectified output. However, rectification is nonlinear due to the diode, while the capacitor introduces memory across symbols, making constellation design over the channel challenging. In this paper, we study constellation design for nonlinear UR-SWIPT channels in both memoryless and memory regimes. First, we propose a tractable unified rectification model that captures both (i) the nonlinear steady-state mapping and (ii) the asymmetric capacitor charging/discharging dynamics under transient operation. To isolate the impact of rectification with memory on ID, we study the information-based design. In this setting, we develop a state-adaptive policy with an algorithmic constellation design that accounts for the rectifier state and shapes the constellation in the observation domain. By approximating the rectifier state distribution, we derive a closed-form average symbol error rate (SER) expression and characterize the rate-reliability (R-R) tradeoff. We then seek constellations that minimize the SER under average transmit power and EH constraints. We address the resulting energy-constrained setting in the memoryless regime using an autoencoder-based framework that embeds the nonlinear rectification model as a differentiable channel block.  Numerical results validate the proposed models, demonstrate the impact of memory on the R-R tradeoff, and show how learned constellations adapt to EH requirements in the rate-energy tradeoff.
	\end{abstract}                                                   
	\vspace{-1mm}

	\begin{IEEEkeywords}
		SWIPT, unified receiver, constellation design, autoencoder, rate-energy tradeoff, rate-reliability tradeoff, memory model.
	\end{IEEEkeywords}\vspace{-2mm}
	
	\section{Introduction}

	In the era of sixth-generation (6G) communication networks, a massive number of Internet of Things (IoT) devices is expected to be deployed \cite{6G}. From environmental sensors and biomedical wearables to industrial monitoring nodes, IoT devices will enable a broad range of applications, particularly in smart cities and healthcare\cite{IoT,Survey}. At such large-scale deployments, manual battery charging or replacement becomes impractical and environmentally unsustainable. To address this limitation, simultaneous wireless information and power transfer (SWIPT) has emerged as a promising solution \cite{psomas}. By leveraging energy harvesting (EH) circuits (\textit{i.e.,} rectifiers that convert radio frequency (RF) signals into direct current (DC) output), IoT devices can harvest energy and decode information from the same RF waveform. Nevertheless, efficient SWIPT operation remains challenging due to the fundamental tradeoff between maximizing harvested energy and ensuring reliable information decoding (ID). This has motivated extensive research in (i) energy-efficient receiver architectures \cite{colocated,RuiZhang,Magazine,Diplexer,Roy,Goudeli,Claessens,BASK,Dual}, (ii) accurate mathematical models for the rectification process \cite{NLEH,Boshkovska,alevizos,ayir2023,Thesis,Clerckx1,Morsi,CircuitBased,CircuitBased2,Markov,Zulq,Elena}, and (iii) the design and optimization of SWIPT waveforms \cite{Varasteh,Survey,Morsi,Dual,AutoencoderSWIPT}.
	
	Early SWIPT studies mainly considered receiver architectures in which ID and EH are performed separately~\cite{colocated}. These include separated receivers, with dedicated antennas for ID and EH, and co-located designs that employ power splitting or time switching to allocate the received RF signal between the two functions. Although these architectures have enabled early practical SWIPT implementations, their reliance on active mixers, multiple RF chains, and power splitters increases circuit complexity and energy consumption, motivating more energy-efficient receiver designs. A first step in this direction is the integrated receiver (IR) architecture~\cite{RuiZhang}, which rectifies the RF input and then splits the resulting DC current between baseband detection and EH, thereby partially integrating both tasks within a single circuit. More recently, unified receivers (URs) have attracted significant attention~\cite{Magazine}. By employing a rectifying front-end and avoiding the need for active mixers and RF power splitters, URs substantially reduce hardware complexity and power consumption, making them especially suitable for low-power IoT applications.
	
	Several works have reported notable progress on UR-SWIPT architectures~\cite{Magazine,Diplexer,Roy,Goudeli,Claessens,BASK,Dual}. One line of work considers diplexer-based URs, where a diplexer separates the low- and high-frequency components~\cite{Diplexer}. By eliminating the need for resource splitting techniques, this architecture can achieve improved rate-energy (R-E) tradeoffs. Extending this framework,~\cite{Goudeli} considers multiple-antenna SWIPT with multiple diplexer-based URs, while~\cite{Roy} investigates superimposed chirp waveforms to further enlarge the achievable R-E region. An alternative line of research considers dual-purpose UR architectures based on rectifying circuits~\cite{Claessens,BASK,Dual}, which are also the focus of this work. Based on a half-wave rectifier,~\cite{Claessens,BASK} propose a receiver design in which ID is performed by periodically sampling the steady-state DC voltage, while the same rectified output is used for EH. To further improve the harvested power and the dynamic range of sampled DC levels at the rectifier output, a double half-wave UR architecture is proposed in~\cite{Dual}. A comprehensive overview of various UR architectures is provided in~\cite{Magazine}.

	Accurate mathematical modeling of the rectification process is fundamental to SWIPT system analysis and design~\cite{Survey}. Here, the main challenge is to develop models that are both sufficiently accurate and analytically tractable. Early works often adopted EH models, where harvested power scales linearly with the average RF power~\cite{Varshney,linear}. Nevertheless, this assumption neglects practical rectifier nonlinearities, especially at moderate and high input powers~\cite{alevizos}. To capture this behavior, several nonlinear EH models have been proposed. For instance, parametric power-to-power models, typically fitted from measurements or circuit simulations, map the average RF input power to the average harvested DC output~\cite{NLEH,Boshkovska}. However, because they depend on the average RF power statistics, they provide limited insight for waveform optimization. A more suitable alternative for waveform design is the waveform-to-energy model in~\cite{ayir2023}, which relates harvested power to the distribution of instantaneous RF-envelope power levels. As another example, circuit-based models connect the input waveform with the rectifier output by incorporating device physics and circuit topology~\cite{Thesis,Clerckx1,Morsi,CircuitBased,CircuitBased2}. These models commonly rely on either piecewise-linear (PWL) or exponential current-voltage (I-V) diode characteristics. While PWL models offer tractability by partitioning diode operation into ON/OFF regions~\cite{Thesis}, Shockley-based models better capture the exponential I-V characteristics \cite{Clerckx1,Morsi,CircuitBased,CircuitBased2}.

	Despite these advances, most existing models are memoryless and neglect capacitor charge/discharge dynamics. In UR-SWIPT, such memory effects cannot be ignored, since they directly affect the sampled rectifier output used for ID. Toward more practical modeling, the authors in~\cite{Elena,Zulq} adopt a PWL diode representation and, via an ordinary differential equation (ODE) formulation, investigate memory effects in UR-SWIPT architectures. For a biased amplitude-shift keying (BASK) modulation, they show that the memory induced by the capacitor distorts the rectified waveform and can significantly degrade SWIPT performance. They also demonstrate that sequential detection (\textit{e.g.,} maximum-likelihood (ML) sequence detection) can exploit this memory to improve the symbol error rate (SER), thereby enlarging the achievable rate-reliability (R-R) region. Although ODE-based models accurately capture the transient rectification dynamics, their computational cost makes them impractical for waveform/constellation design. As an alternative, the authors in~\cite{Markov} model circuit memory through a Markov decision process and learn the state transitions from data. However, this approach is developed for separated SWIPT architectures and does not provide an analytical expression for the rectifier output, motivating the need for tractable memory-aware models.
	
Moreover, waveform design plays a central role in SWIPT systems and is closely tied to both the adopted receiver architecture and the rectifier model~\cite{Survey}. Under linear EH models, Gaussian signaling is capacity-achieving and simultaneously maximizes harvested energy~\cite{Varshney,Gaussian}. In this case, the R-E tradeoff is mainly determined by the receiver architecture rather than by the transmit signal distribution~\cite{Survey}. However, under nonlinear EH models this property no longer holds. While Gaussian signaling remains capacity-achieving  over additive white Gaussian noise (AWGN) channels, on-off keying (OOK)-type inputs are known to be favorable for EH because they exploit rectifier's exponential trend~\cite{Varasteh,Morsi}. This mismatch motivates the design of SWIPT waveforms, such as multisine signals with high peak-to-average power ratio (PAPR), that balance ID reliability and harvested energy. In parallel, learning-based waveform design has attracted increasing attention in SWIPT~\cite{MLforSWIPT}. Inspired by deep learning for the physical layer~\cite{HoydisAutoencoder}, autoencoder (AE)-based architectures have been used to jointly optimize transmitter and receiver mappings under nonlinear EH constraints. For example, the authors in~\cite{AutoencoderSWIPT} propose an AE-based SWIPT framework in which the end-to-end system is trained over a nonlinear EH model. The learned constellations were shown to adapt to the EH nonlinearity and agree with prior theoretical insights~\cite{Asymmetric}.
	
At this point, it is important to emphasize that for separated, co-located, and integrated SWIPT receivers, the information channel is typically modeled as linear. However, in dual-purpose UR-SWIPT, front-end rectification makes the effective channel nonlinear, affecting both the harvested energy and the observation used for ID\cite{Claessens}. Consequently, waveform/constellation design becomes more challenging, since the same nonlinearity impacts EH and distorts the received constellations. In this direction,~\cite{BASK} proposes pre- and post-compensation techniques to mitigate distortion and improve ID accuracy, along with a BASK modulation scheme that boosts harvested energy at the expense of ID performance. Furthermore,~\cite{Dual} investigates multitone-based modulation schemes tailored to a double half-wave rectifier, outperforming the single-rectifier baseline. Nevertheless, these approaches use modulation schemes that either yield suboptimal R-E tradeoffs or prioritize a single objective, ignoring memory effects introduced by the capacitor. To address this gap, in this paper, we study constellation design for nonlinear UR-SWIPT channels in both memoryless and memory regimes. The main contributions of this work are summarized as follows:
	\vspace{-1mm}
	
	\begin{itemize}
		
		\item We propose a unified rectification model that yields a tractable discrete-time representation of the UR-SWIPT link. The model captures both (i) the nonlinear steady-state mapping from input amplitude to rectified DC voltage and (ii) the asymmetric exponential charging/discharging dynamics introduced by the rectifier. In addition, it allows us to study memory effects beyond steady-state assumptions, avoiding the need to embed nonlinear ODE solvers within optimization and learning loops.

		\item To isolate the impact of nonlinear rectification and memory on ID, we first study the \emph{information-based} design. We derive algorithmic constellation designs in both the memoryless and memory regimes by enforcing uniformly spaced sampled DC voltage levels at the rectifier output.  For the memory regime, we further propose a state-dependent transmission policy and characterize its average SER performance. By approximating the distribution of the rectifier state, we derive a closed-form expression for the average SER and characterize the impact of memory in the R-R tradeoff.
		
		\item We then formulate constellation design as SER minimization under average transmit power and EH constraints. For the resulting \emph{energy-constrained} setting, we focus on the memoryless regime. Since the resulting optimization problem is nonconvex, we develop an AE-based constellation design framework. By embedding the nonlinear rectification model as a differentiable channel block, we enable the joint optimization of the transmit constellation and detection rule while accounting for the EH constraint during training.
		
		\item Numerical results validate the proposed memory-aware rectification model and the closed-form SER approximation. They also demonstrate the impact of rectifier memory on the R-R tradeoff and show that the optimized \emph{information-based} constellations outperform conventional unipolar PAM. In the \emph{energy-constrained} setting, the learned constellations are used to characterize the R-E tradeoff and reveal that a high-amplitude ``power symbol'' helps satisfy EH requirements while maintaining reliable ID.
		
	\end{itemize}
	
	The remainder of this paper is organized as follows. Section~II presents the considered UR-SWIPT architecture and introduces the unified rectification model. Section~III develops algorithmic constellation designs along with a state-adaptive transmission policy for the \emph{information-based} setting in both the memoryless and memory regimes, and derives the corresponding average SER characterization. Section~IV formulates the \emph{energy-constrained} constellation design problem and presents the AE-based learning framework. Finally, Section~V provides numerical results and Section~VI concludes the paper.
	
	\textit{Notation:} Boldface letters denote vectors; $\mathcal{N}(\mu,\sigma^2)$ denotes real Gaussian random variables with mean $\mu$ and variance $\sigma^2$; $F_{\mathcal{X}}(\cdot)$ and $f_{\mathcal{X}}(\cdot)$ denote the cumulative distribution function (CDF) and probability density function (PDF) of a random variable $\mathcal{X}$, respectively; $\mathbb{E}\{\cdot\}$ is the expectation operator; $Q(\cdot)$ is the Gaussian $Q$-function, $\phi(\cdot)$ and $\Phi(\cdot)$ are the standard normal PDF and CDF, and $\Phi_2(\cdot,\cdot;\rho)$ is the standard bivariate Gaussian CDF with correlation coefficient $\rho$; $I_n(\cdot)$ is the modified Bessel functions of the first kind and order $n$; $W(\cdot)$ denotes the Lambert $\mathcal{W}$ function; and $|\cdot|$ is the absolute value.
	
	\section{System Model}
	\label{sec:sysmodel}
	
	We consider a point-to-point topology in which a single-antenna transmitter communicates with a single-antenna UR featuring a Schottky-based half-wave rectifier, a load resistance $R_L$, and a capacitor $C$, forming a nonlinear (due to the diode) channel with memory (due to the capacitor)\cite{BASK}. The received RF signal $r(t)$ is applied at the rectifier input, and the resulting load voltage $v_L(t)$ across $R_L$ is sampled at symbol boundaries, yielding $v_L^{(n)}$ for ID with additive sampling noise $n_s$, while the same rectified output is used for EH, as shown in Fig. 1.
	
	\vspace{-2mm}
	
	\subsection{Signal Model}
	\label{subsec:setup}\vspace{-0.5mm}
	
	The transmitter communicates $M$ equiprobable messages, where each message carries $\log_2(M)$ bits. Since phase information is lost after rectification, the transmitter employs an amplitude modulation with non-negative amplitude levels.\footnote{In UR-SWIPT designs, front-end rectification makes the receiver observe only the RF envelope, so phase information is largely lost. This naturally motivates non-coherent amplitude/energy-based modulation with non-negative levels, such as BASK or pulse-energy modulation (PEM) schemes~\cite{RuiZhang,BASK}.} Let $A_m \ge 0$ be the amplitude for the $m$-th message, and let $x^{(n)}$ denote the transmitted signal during the $n$-th symbol interval. The received RF signal over $t\in[nT_s,(n+1)T_s)$ follows\vspace{-1mm}
	\begin{equation}
		\begin{aligned}
			r^{(n)}(t) &= x^{(n)}(t) + n_a(t)\\
			&= \sqrt{2P_T} A^{(n)} \cos(2\pi f_c t) + n_a(t),
		\end{aligned}\vspace{-1mm}
	\end{equation}
	where $A^{(n)} \in \{A_0,\ldots,A_{M-1}\}$ is the amplitude of the $n$-th symbol interval, $T_s$ denotes the symbol duration, $P_T$ is the transmit power, $f_c$ is the carrier frequency, and $n_a(t)$ is antenna noise. To satisfy an average transmit power constraint, we have $\mathbb{E}\!\left\{|x^{(n)}|^2\right\}\le P_T$ or equivalently $\frac{1}{M}\sum_{m=0}^{M-1}A_m^2 \le 1$.

	\begin{figure}[t]\centering
		\includegraphics[width=0.9\linewidth]{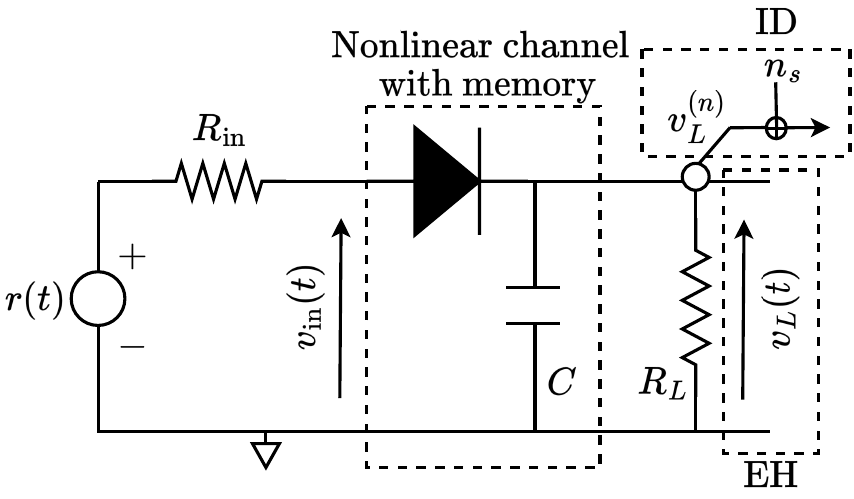}\vspace{-2mm}
		\caption{Considered UR-SWIPT architecture based on a half-wave rectifier\cite{Elena}.}\vspace{-5mm}
	\end{figure}

	\begin{figure*}[t]
		\centering
		\begin{subfigure}{.33\textwidth}
			\centering
			\includegraphics[width=\linewidth]{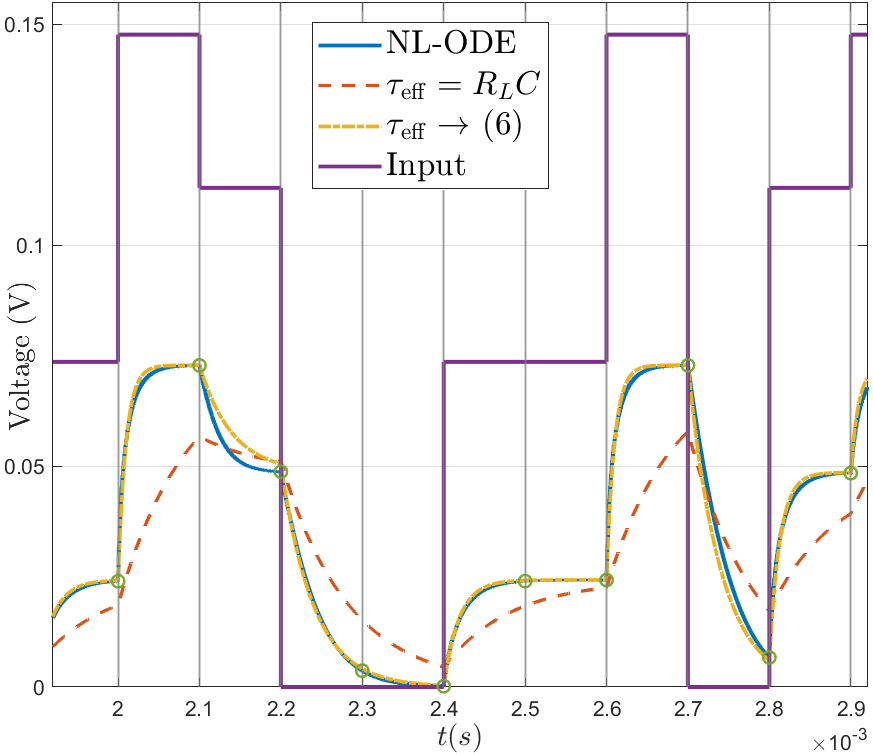}\vspace{-1mm}
			\caption{$T_s=100~\mu\mathrm{s}$, $\alpha_f=0.432, \beta_f=0.485$.}
			\label{fig:MemoryOutput1}
		\end{subfigure}\hfill
		\begin{subfigure}{.33\textwidth}
			\centering
			\includegraphics[width=\linewidth]{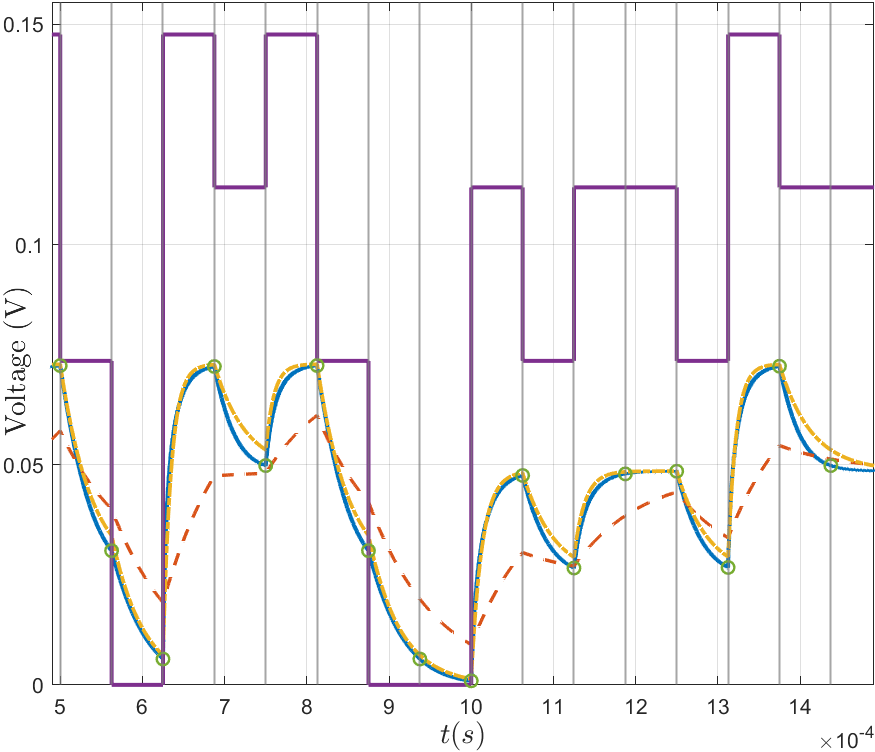}\vspace{-1mm}
			\caption{$T_s=62.5~\mu\mathrm{s}$, $\alpha_f=0.443, \beta_f=0.470$.}
			\label{fig:MemoryOutput2}
		\end{subfigure}\hfill
		\begin{subfigure}{.33\textwidth}
			\centering
			\includegraphics[width=\linewidth]{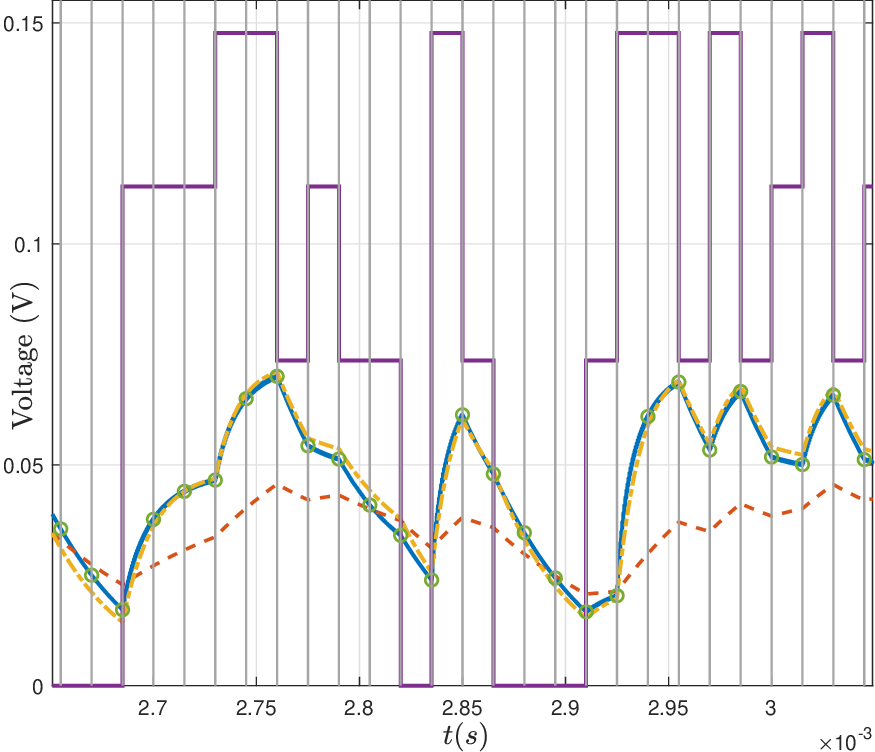}\vspace{-1mm}
			\caption{ $T_s=15~\mu\mathrm{s}$, $\alpha_f=0.490, \beta_f=0.460$.}
			\label{fig:MemoryOutput3}
		\end{subfigure}\hfill\vspace{-1mm}
		\caption{Comparison between the nonlinear ODE model~\eqref{eq:ode_shockley} (solid blue) and the proposed discrete-time approximation~\eqref{eq:dt_model} using (i) a constant time constant $\tau_{\mathrm{eff}}=R_LC$ (dashed red) and (ii) the fitted piecewise $\tau_{\mathrm{eff}}$ in~\eqref{eq:tau_piecewise} (dash-dotted yellow), for three symbol durations. We set $R_L=8.25~\mathrm{k}\Omega$, $C=10~\mathrm{nF}$, $R_{\mathrm{in}}=50~\Omega$, $I_s=5~\mu\mathrm{A}$, $\eta=1.05$, $V_T=25.85~\mathrm{mV}$, and $P_T=-10~\mathrm{dBm}$.}\vspace{-4mm}
		\label{fig:MemoryOutput}
	\end{figure*}

	\subsection{Unified Rectification Model}
	\label{subsec:unified_rectifier}\vspace{-0.5mm}
	
	On the receiver side, the received signal is processed through a Schottky-based half-wave rectifier. Assuming perfect impedance matching, the input voltage at the rectifier is $v_{\mathrm{in}}(t)=r(t)\sqrt{R_{\mathrm{in}}}$, where $R_{\mathrm{in}}$ denotes the input impedance~\cite{Dual}. In the low-power regime, the reverse current becomes negligible and the diode current follows the Shockley equation. By applying Kirchhoff's current law at the diode-load node, we get
	\begin{equation}
		I_s\left(e^{\frac{v_{\mathrm{in}}(t) - v_L(t)}{\eta V_T}} - 1\right) = \frac{v_L(t)}{R_L} + C\frac{\mathrm{d}v_L(t)}{dt},
		\label{eq:ode_shockley}
	\end{equation}
	where $I_s$ is the diode saturation current, $\eta$ is the ideality factor, $V_T$ is the thermal voltage, and $v_L(t)$ is the load voltage. The nonlinear differential equation in (2) provides an exact circuit-level description of the rectifier dynamics. However, repeatedly solving~\eqref{eq:ode_shockley} via nonlinear ODE solvers within an optimization or learning loop is computationally expensive. We therefore develop a closed-form approximation that captures the two mechanisms most relevant to UR-SWIPT design: (i) the nonlinear steady-state mapping induced by the diode and (ii) the transient capacitor charging/discharging behavior that introduces memory when the symbol duration is insufficient for settling. We begin with the steady-state characterization of the considered half-wave rectifier.

	\subsubsection{Steady-state Analysis}
	Let $v_L(t)=\bar v_L+\tilde v_L(t)$, where $\bar v_L$ and $\tilde v_L(t)$ denote the DC and alternating current (AC) components of the load voltage, respectively. Since the rectifier input $v_{\mathrm{in}}(t)$ is periodic with carrier period $T\triangleq 1/f_c$, the ripple component $\tilde v_L(t)$ is also periodic with the same period. Hence, $\tilde v_L(t)=\tilde v_L(t+kT)$ for all $k\in\mathbb{Z}$ and $\frac{1}{T}\int_0^{T}\tilde v_L(t)\,dt=0$. Averaging both sides of \eqref{eq:ode_shockley} over one period yields\vspace{-1mm}
	\begin{equation}
		\frac{1}{T}\int_0^{T} I_s\!\left(e^{\frac{v_{\mathrm{in}}(t)-\bar v_L-\tilde v_L(t)}{\eta V_T}}-1\right)\!\mathrm{d}t
		=
		\frac{\bar v_L}{R_L}.\vspace{-1mm}
	\end{equation}
	 Assuming a sufficiently large capacitor $C$ implies that the ripple $\tilde v_L(t)$ is small and can be neglected. Then, by setting $K=\frac{I_sR_L}{\eta V_T}$, the steady-state voltage corresponding to the $m$-th amplitude is given by~\cite{Morsi}
	\begin{equation}
		\bar v_L(A_m)
		=
		\eta V_T\!\left[W\Big(I_0(\frac{\sqrt{2P_TR_{\mathrm{in}}}\,A_m}{\eta V_T})\,Ke^{K}\Big)-K\right],
		\label{eq:vbar_closed_form}
	\end{equation}
	where the antenna noise $n_a(t)$ is neglected since its contribution to the DC component is negligible~\cite{Dual}. Equation~\eqref{eq:vbar_closed_form} shows that the steady-state mapping is inherently nonlinear, as it results from the exponential characteristic of the Shockley diode equation, while its effective nonlinearity varies with $P_T$, leading to different operating regimes.\vspace{-0.5mm}

	\subsubsection{Memory-aware model}
	\label{subsubsec:memory_model}
	When $T_s$ is not large enough, the load voltage does not reach steady-state within one symbol and the sampled output depends on both the current amplitude and the previous rectifier state~\cite{Elena}. To avoid repeatedly solving~\eqref{eq:ode_shockley}, we approximate the voltage evolution over each symbol interval by a first-order relaxation toward the corresponding steady-state level. Under this approximation and after sampling $v_L(t)$ at $t=nT_s$, the load voltage evolves according to
	\begin{equation}
		v_L^{(n)}\!
		=\!
		v_L^{(n-1)}\!
		+
		\left(\!\bar v_L(A^{(n)})\!-v_L^{(n-1)}\!\right)\!\!
		\left(\!1\!-e^{-T_s/\tau_{\mathrm{eff}}\left(A^{(n)}\right)}\!\right)\!,
		\label{eq:dt_model}\vspace{-1mm}
	\end{equation}
	where $v_L^{(n-1)}\triangleq v_L((n-1)T_s)$ and $v_L^{(n)}\triangleq v_L(nT_s)$ denote the load voltage at the beginning and end of the $n$-th symbol interval, respectively. The effective time constant $\tau_{\mathrm{eff}}(A^{(n)})$ captures the charging/discharging behavior approximated as
	\begin{equation}
		\tau_{\mathrm{eff}}(A^{(n)}) \approx
		\begin{cases}
			\alpha_f \dfrac{R_L C}{1+\dfrac{\bar v_L(A^{(n)})}{\eta V_T}},
			& \bar v_L(A^{(n)}) \ge v_L^{(n-1)}, \\[2mm]
			\qquad \beta_f R_L C,
			& \bar v_L(A^{(n)}) < v_L^{(n-1)},
		\end{cases}
		\label{eq:tau_piecewise}
	\end{equation}
	where $\alpha_f$ and $\beta_f$ are fitting parameters. Further details about the proposed model are provided in the Appendix~\ref{app:memory_model}.
	
	The above approximation reflects the rectifier's asymmetric dynamics. When $\bar v_L(A^{(n)}) \ge v_L^{(n-1)}$, the rectifier operates in the charging regime and the diode conducts. In this case, the diode contributes an additional incremental conductance, which reduces the effective time constant. Moreover, due to the exponential diode I-V characteristic, larger input amplitudes yield stronger diode conduction, leading to faster convergence toward $\bar v_L(A^{(n)})$. This behavior is also captured in~\eqref{eq:tau_piecewise} through the symbol-dependent charging time constant. Conversely, when $\bar v_L(A^{(n)}) < v_L^{(n-1)}$, the diode is mostly reverse-biased and its incremental conductance becomes negligible. The capacitor then discharges mainly through $R_L$, leading to the slower passive time constant, approximately $R_LC$.

	\begin{remark}
		\label{rem:ss_limit}
		From~\eqref{eq:dt_model}, the deviation from the steady-state level \(\bar v_L(A^{(n)})\) evolves as\vspace{-1mm}
		\begin{equation}
			v_L^{(n)}-\bar v_L(A^{(n)})
			=
			\bigl(v_L^{(n-1)}-\bar v_L(A^{(n)})\bigr)\,e^{-T_s/\tau_{\mathrm{eff}}(A^{(n)})}.
			\label{eq:ss_error_decay}\vspace{-1mm}
		\end{equation}
		Hence, as \(T_s/\tau_{\mathrm{eff}}(A^{(n)})\to\infty\) (i.e., \(T_s\gg \tau_{\mathrm{eff}}(A^{(n)})\)), the exponential term vanishes and $v_L^{(n)} \to \bar v_L(A^{(n)}),$ so \eqref{eq:dt_model} reduces to the memoryless steady-state mapping.
	\end{remark}

Fig.~\ref{fig:MemoryOutput} validates the proposed model by comparing it with the numerical solution of the nonlinear ODE in~\eqref{eq:ode_shockley} over the same input sequence for different symbol durations. For $T_s=100~\mu\mathrm{s}$, the load voltage nearly settles to the steady-state value within each interval, so memory effects are weak. In contrast, for $T_s=15~\mu\mathrm{s}$, the capacitor does not fully settle before the next symbol arrives, and the output becomes strongly dependent on the previous rectifier state. As a simple baseline, we also consider the discrete-time model in~\eqref{eq:dt_model} with a constant effective time constant $\tau_{\mathrm{eff}}=R_LC$. Since it ignores the incremental conductance induced by the diode during forward bias, this baseline fails to capture the asymmetric charging/discharging dynamics of the rectifier and leads to mismatch around symbol transitions. In contrast, the proposed piecewise model in~\eqref{eq:tau_piecewise} accounts for both the fast, symbol-dependent charging regime and the slower passive discharging regime, thereby closely matching the nonlinear ODE model in~\eqref{eq:ode_shockley} across all symbol durations.

	\subsection{Information Decoding and Energy Harvesting}
	\label{subsec:ID_EH}\vspace{-0.5mm}
	
	For ID, the UR samples the rectifier output at the end of each symbol interval. 
	From the discrete-time model in~\eqref{eq:dt_model}, the sampled voltage is not determined solely by the current transmit amplitude, but also by the voltage stored on the load capacitor from the previous interval. Hence, the rectifier forms a nonlinear state-dependent channel, and the corresponding noisy observation used for ID can be expressed as\vspace{-1mm}
	\begin{equation}
		y^{(n)}
		= v_L^{(n)} + n_s
		= f_{\mathrm{nl}}\bigl(A^{(n)},v_L^{(n-1)}\bigr) + n_s,
		\label{eq:ID_obs}\vspace{-1mm}
	\end{equation}
	where $f_{\mathrm{nl}}\bigl(\cdot,\cdot\bigr)$ denotes the one-step rectifier mapping induced by~\eqref{eq:dt_model}, and $n_s \sim \mathcal{N}(0,\sigma_s^2)$ is additive noise resulting from the analog-to-digital conversion process~\cite{Elena}. 

	The same rectified output is also used for EH through the load resistance $R_L$. By neglecting the contribution of noise to the harvested energy~\cite{Dual}, the energy collected during the $n$-th symbol interval can be expressed as
	\begin{equation}
		\mathcal{E}_h^{(n)} = \int_{(n-1)T_s}^{nT_s} \frac{v_L^2(t)}{R_L}\, \mathrm{d}t \approx \frac{T_s}{R_L} |v_L^{(n)}|^2.
		\label{eq:Eh_def}
	\end{equation}
where the approximation follows since, for relatively large capacitance, the load voltage varies slowly within one symbol interval~\cite{Elena}. Since the sampled output depends on the transmitted symbol and, under memory, on the previous capacitor state, the rectifier state is a random variable denoted by $\mathcal{V}$. Therefore, the average harvested energy can be written as
	\begin{equation}
		\mathcal{E}_h= \frac{T_s}{R_L}\mathbb{E}_{\mathcal{V}} \left[v^2\right]
		=
		\frac{T_s}{R_L}
		\int_{0}^{\infty} v^2 f_{\mathcal{V}}(v)\,\mathrm{d}v,\label{eq:Ph_avg}
	\end{equation}
	where $f_{\mathcal{V}}(v)$ denotes the PDF of $\mathcal{V}$. In the steady-state regime, the sampled output is $v=\bar v_L(A_m)$. Since the symbols are equiprobable, it reduces to $ \mathcal{E}_h\approx\frac{1}{M}\sum_{m=0}^{M-1}\frac{T_s}{R_L}\left|\bar v_L(A_m)\right|^2$.
	
The expressions in~\eqref{eq:ID_obs} and~\eqref{eq:Ph_avg} show that ID and EH depend on the same sampled rectifier output. Furthermore, in the presence of memory, the average performance of both tasks depends not only on the constellation itself, but also on the resulting state distribution $f_{\mathcal{V}}(v)$ that it induces. In order to isolate the effect of nonlinear rectification and memory on ID performance, we first consider the \emph{information-based} setting, without imposing EH constraints. In what follows, we derive locally optimal constellations through an algorithmic design that applies to both the memoryless and memory regime.
	\vspace{-2mm}
	\section{Information-based Constellation: Algorithmic Design and Insights}\vspace{-0.5mm}

For equiprobable signaling over an AWGN channel, the SER in the asymptotic regime is mainly described by the minimum distance between adjacent observation points. In a linear memoryless channel, the asymptotic SER is minimized by uniformly spacing the transmit amplitudes. However, in the considered UR-SWIPT architecture, uniformly spaced transmit amplitudes do not lead to uniformly spaced sampled voltages. For the \emph{information-based} setting, we shall therefore construct the constellation by enforcing uniform spacing directly at the rectifier output, \textit{i.e.,} in the observation domain.
	
	\vspace{-3mm}
	
	\subsection{Impact of Memory and State-Dependent Constellations}\vspace{-0.5mm}
	
	In the memoryless steady-state regime, a fixed transmit constellation induces a fixed set of sampled voltage levels. In particular, when $T_s\gg \tau_{\mathrm{eff}}$, the rectifier output settles within each symbol interval, and the sampled voltage depends only on the current amplitude through~\eqref{eq:vbar_closed_form}. In this case, the rectifier nonlinearity can be pre-compensated by selecting the transmit amplitudes through inversion of the steady-state mapping~\cite{Conf}. However, when memory effects are non-negligible, the one-step mapping is different and a single static constellation cannot generally be optimal across all possible states. Nevertheless, for any fixed state $V_0$, a locally optimal \emph{information-based} constellation can be obtained by uniformly spacing the corresponding sampled voltage levels.

	As a first step, we need to identify the lowest sampled voltage that can be reached from an initial state $V_0$ within one symbol interval. Obviously, this minimum is obtained by transmitting the zero-amplitude symbol. In this case, the diode is mostly reverse-biased and the capacitor discharges through $R_L$, so the load voltage decays approximately as $v_L(t)\approx V_0 e^{-t/(\beta_f R_L C)}$. Therefore, the minimum sampled voltage reachable from $V_0$ over one symbol is given by\vspace{-1mm}
	\begin{equation}
		V_{\min}(V_0)= V_0 e^{-T_s/(\beta_f R_L C)}.
		\label{eq:Vmin}\vspace{-1mm}
	\end{equation}
For a spacing $d$, the uniformly spaced target voltages are\vspace{-1mm}
\begin{equation}
	v_L^m(d,V_0)=V_{\min}(V_0)+m d.
	\label{eq:target_voltages}\vspace{-1mm}
\end{equation}
To obtain the corresponding transmit amplitudes, we invert the one-step mapping with respect to the amplitude. For a fixed rectifier state $V_0$, this gives\vspace{-1mm}
\begin{equation}
	A_m(d,V_0)
	=
	f_{\mathrm{nl}}^{-1}\bigl(v_L^m(d,V_0);V_0\bigr),
	\label{eq:Am_inverse}\vspace{-1mm}
\end{equation}
where $f_{\mathrm{nl}}^{-1}(\cdot;V_0)$ denotes the inverse of the one-step mapping for fixed state $V_0$. In general, this inverse is not available in closed form, but $A_m(d,V_0)$ can be computed numerically using standard root-finding methods.\footnote{The rectification function must be invertible over the non-negative amplitude range of interest. However, the AE-based framework in Section~IV remains applicable even for non-monotonic rectifiers (\textit{e.g.,}~\cite{Me}) without requiring an explicit inverse.}  To satisfy the transmit power constraint, the maximum feasible spacing is defined as
\begin{equation}
	d_{\max}(V_0)
	=
	\max_{d\geq 0}
	\left\{
	d:
	\frac{1}{M}
	\sum_{m=0}^{M-1}
	\big[
	A_m(d,V_0)
	\big]^2
	\leq 1
	\right\}.
	\label{eq:dmax_definition}
\end{equation}
Since the average transmit power is monotonic in $d$, $d_{\max}(V_0)$ can be found by bisection, and the largest feasible spacing is obtained by increasing $d$ until the average power constraint is met. The procedure is summarized in Algorithm~\ref{alg:dmax_mem}. Once $d_{\max}(V_0)$ is determined, the resulting constellation $\mathbf{A}(V_0)=\{A_m(V_0)\}_{m=0}^{M-1}$ is locally optimal for a given rectifier state $V_0$. Notably, when $T_s$ is sufficiently large and the transient term vanishes and $f_{\mathrm{nl}}(A,V_0)\approx \bar v_L(A)$ becomes independent of $V_0$. In this case, Algorithm~\ref{alg:dmax_mem} reduces to the memoryless design.

	\begin{algorithm}[t]
		\caption{State-dependent bisection for \emph{information-based} constellation design}
		\label{alg:dmax_mem}
		\begin{algorithmic}[1]
			\State \textbf{Input:}  $M$, $V_0$, $T_s$, $R_L$, $C$, $\alpha_f$, $\beta_f$, tolerance $\epsilon$; mappings $f_{\mathrm{nl}}(\cdot,V_0)$ and $f_{\mathrm{nl}}^{-1}(\cdot;V_0)$
			\State \textbf{Output:}  $\mathbf{A}(V_0)=\{A_m(V_0)\}_{m=0}^{M-1}$
			
			\State $V_{\min}\gets V_0e^{\left(-T_s/\beta_f R_L C\right)}$
			\State $d_\ell\gets 0$, $d_u\gets \epsilon$
			
			\Repeat
			\For{$m=0,\ldots,M-1$}
			\State $A_m\gets f_{\mathrm{nl}}^{-1}(V_{\min}+m d_u;V_0)$
			\EndFor
			\State $P_u\gets \frac{1}{M}\sum_{m=0}^{M-1}A_m^2$
			\If{$P_u\leq 1$}
			\State $d_\ell\gets d_u$, $d_u\gets 2d_u$
			\EndIf
			\Until{$P_u>1$}
			
			\While{$d_u-d_\ell>\epsilon$}
			\State $d\gets (d_\ell+d_u)/2$
			\For{$m=0,\ldots,M-1$}
			\State $A_m\gets f_{\mathrm{nl}}^{-1}(V_{\min}+m d;V_0)$
			\EndFor
			\State $P(d)\gets M^{-1}\sum_{m=0}^{M-1}A_m^2$
			\If{$P(d)>1$}
			\State $d_u\gets d$
			\Else
			\State $d_\ell\gets d$
			\EndIf
			\EndWhile
			
			\State $d_{\max}(V_0)\gets d_\ell$
			\For{$m=0,\ldots,M-1$}
			\State $A_m(V_0)\gets f_{\mathrm{nl}}^{-1}(V_{\min}+m d_{\max}(V_0);V_0)$
			\EndFor
			\State \Return $\mathbf{A}(V_0)$
		\end{algorithmic}
	\end{algorithm}

	Fig.~3(a) shows the resulting transmit constellations $\mathbf{A}(V_0)$ across different rectifier states. As $V_0$ increases, the constellation changes because each amplitude may induce either charging or discharging, depending on its steady-state voltage $\bar v_L(A_m)$. For small rectifier states, \textit{e.g.} $V_0=0.01~\mathrm{V}$, all nonzero amplitudes satisfy $\bar v_L(A_m)>V_0$ and therefore operate in the charging regime. As $V_0$ increases, some of the symbols cross the transition point $\bar v_L(A_m)=V_0$ and move into the discharging regime. For example, when $V_0\gtrsim 0.033~\mathrm{V}$, the lowest nonzero amplitude enters the discharging regime, whereas for $V_0\gtrsim 0.06~\mathrm{V}$, only the largest amplitude remains in the charging regime. This trend follows from the piecewise time constant in~\eqref{eq:tau_piecewise}, which separates the forward-biased and reverse-biased diode regimes. In addition, Fig.~3(b) shows the corresponding effective constellations in the observation domain. As \(V_0\) increases, the lowest reachable voltage \(V_{\min}(V_0)\) also increases, and the sampled voltage levels shift upward. At the same time, this shift reduces the maximum feasible spacing $d_{\max}(V_0)$ and the sampled voltage levels become more compressed. Additionally, for larger rectifier states, most symbols mainly drive the diode into the reverse-biased regime. In this regime, the capacitor discharges more slowly, which limits the voltage change within one symbol interval and further compresses the effective constellation.
	
	\vspace{-2mm}
	\begin{remark} For large rectifier states, where all symbols drive the diode in the reverse-biased regime, the effective time constant in~\eqref{eq:tau_piecewise} is simply described by $\tau_{\mathrm{eff}}=\beta_f R_LC$. Then, the one-step mapping reduces to\vspace{-1mm}
		\begin{equation}
			v_L^{(n)}
			=
			e^{-\frac{T_s}{\beta_f R_LC}}v_L^{(n-1)}
			+
			\left(1-e^{-\frac{T_s}{\beta_f R_LC}}\right)\bar v_L(A_m).\vspace{-1mm}
		\end{equation}
		and the sampled voltages become an affine transformation of the steady-state mapping $\bar v_L(A_m)$. Since the affine mapping is increasing, this case gives the same transmit amplitudes as the steady-state design, with only the output spacing scaled.
	\end{remark}\vspace{-2mm}

	\begin{figure}[t]
	\centering
	\begin{subfigure}{.24\textwidth}
		\centering
		\includegraphics[width=\linewidth]{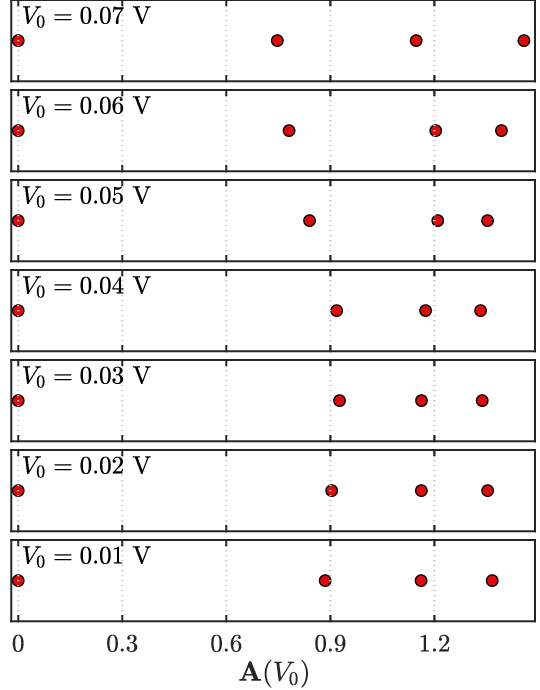}\vspace{-1mm}
		\caption{Transmit constellations.}\vspace{-1mm}
	\end{subfigure}%
	\begin{subfigure}{.24\textwidth}
		\centering
		\includegraphics[width=\linewidth]{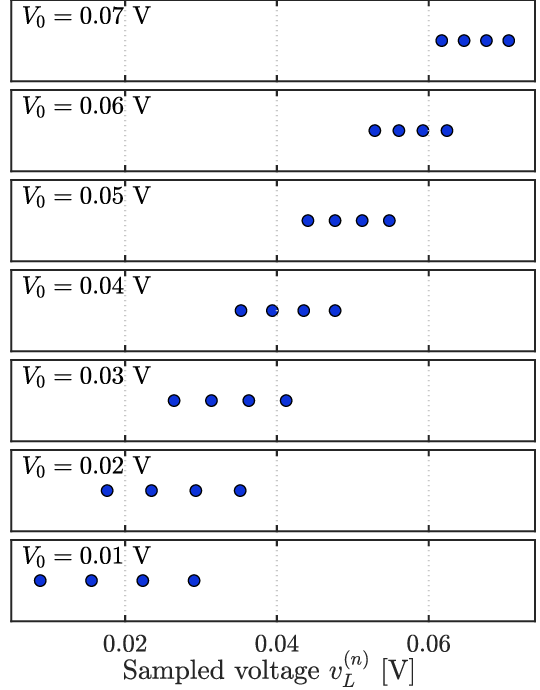}\vspace{-1mm}
		\caption{Effective constellations.}\vspace{-1mm}
	\end{subfigure}
	\caption{Locally optimal \emph{information-based} constellation $\mathbf{A}(V_0)$, along with the corresponding sampled voltages, with $V_0=\{0.01,0.02,\ldots,0.07\}$. We set $P_T=-10~\mathrm{dBm}$, $T_s=10~\mu\mathrm{s}$, and $\alpha_f=0.44$, $\beta_f=0.48$.}\vspace{-4mm}
\end{figure}

Overall, rectifier memory makes the effective constellation depend on the capacitor voltage at the beginning of each symbol interval. As a result, the effective constellation spacing and the instantaneous SER vary with the rectifier state. This motivates a state-adaptive transmission policy, in which the transmit constellation is adapted to the current rectifier state. In what follows, we characterize the corresponding state distribution and derive the average SER of the proposed scheme.

		\vspace{-3mm}

\subsection{Rectifier State Distribution and SER Performance}
\label{subsec:state_distribution_ser}

We now consider a state-adaptive \emph{information-only} scheme, where the transmitter is assumed to know the rectifier state $V_0$. At the beginning of each symbol interval, the transmitter constructs the locally optimal $M$-ary constellation $\mathbf{A}(V_0)$ and selects one of its $M$ amplitudes with equal probability.\footnote{If instantaneous state information is unavailable at the transmitter, one could instead design a fixed constellation that is robust over the typical operating range of $V_0$, which we leave for future work.} For a fixed state $V_0$, Algorithm~\ref{alg:dmax_mem} produces uniformly spaced sampled voltage levels with spacing $d_{\max}(V_0)$. 	Therefore, under minimum-distance detection, the conditional SER is given by~\cite{Proakis}\vspace{-1mm}
\begin{equation}
	\mathrm{SER}\bigl(\mathbf{A}(V_0); V_0\bigr)
	=
	\frac{2(M-1)}{M}
	Q\!\left(\frac{d_{\max}(V_0)}{2\sigma_s}\right).
	\label{eq:ser_cond_V0}\vspace{-1mm}
\end{equation}

As already mentioned, the rectifier state is a random variable, denoted by $\mathcal{V}$. Each state realization $v$ induces a different feasible spacing $d_{\max}(v)$ and, consequently, a different conditional SER. Hence, the average SER of the state-adaptive scheme is then obtained by averaging~\eqref{eq:ser_cond_V0} over the distribution of $\mathcal{V}$, yielding\vspace{-1mm}
\begin{equation}
	\overline{\mathrm{SER}}\! \triangleq\! \mathbb{E}_{\mathcal{V}}\! \!\left[ \mathrm{SER}\bigl(\mathbf{A}(\mathcal{V}) ;\! \mathcal{V}\bigr) \right]\!\! =\!\! \int_{-\infty}^{\infty}\!\!\!\!\!\! \mathrm{SER}\bigl(\mathbf{A}(v);\! v\bigr)\! f_{\mathcal{V}}(v) \mathrm{d}v, \label{eq:avg_SER_def}\vspace{-1mm}
\end{equation}
In general, $f_{\mathcal{V}}(v)$ is not available in closed-form since the state evolves through the nonlinear mapping and depends on both the constellation and symbol probabilities. However, it can be estimated numerically by simulating~\eqref{eq:dt_model} over long sequences.

		\begin{figure}[t]
		\centering
		\begin{subfigure}{.242\textwidth}
			\centering
			\includegraphics[width=\linewidth]{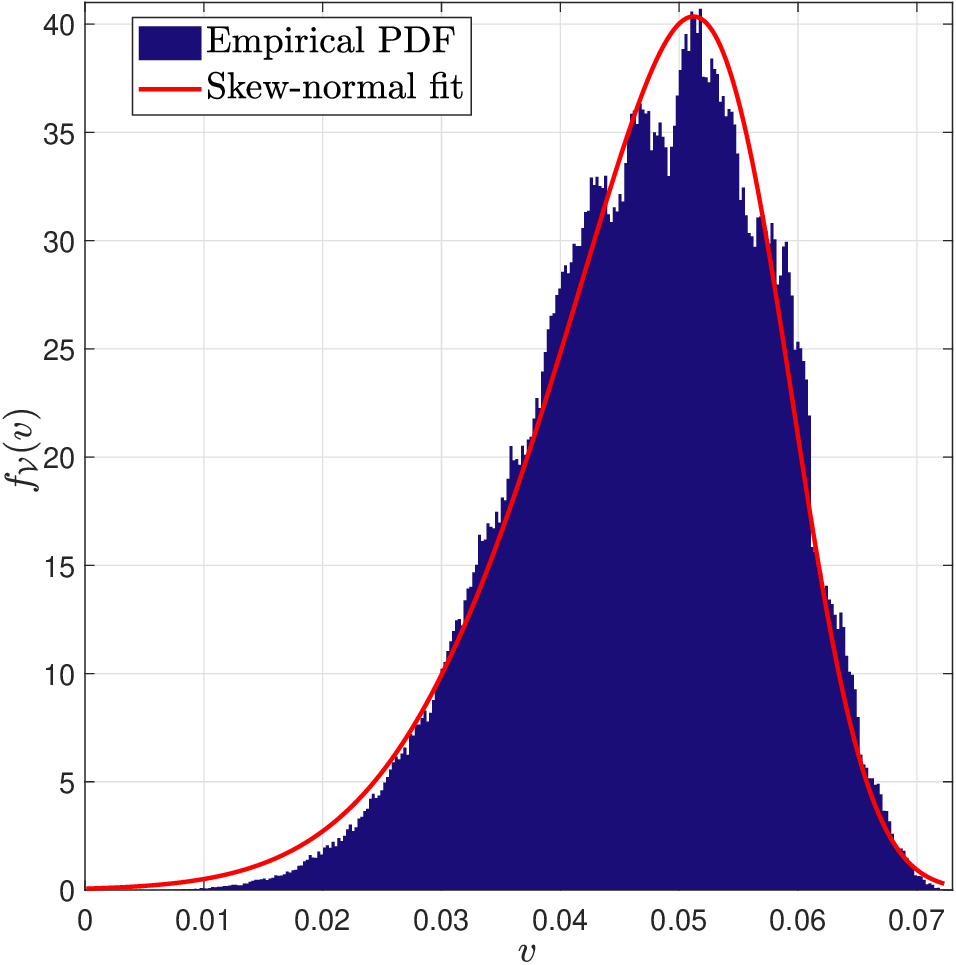}\vspace{-1mm}
			\caption{$T_s=10~\mu\mathrm{s}$.}
		\end{subfigure}\hfill
		\begin{subfigure}{.242\textwidth}
			\centering
			\includegraphics[width=\linewidth]{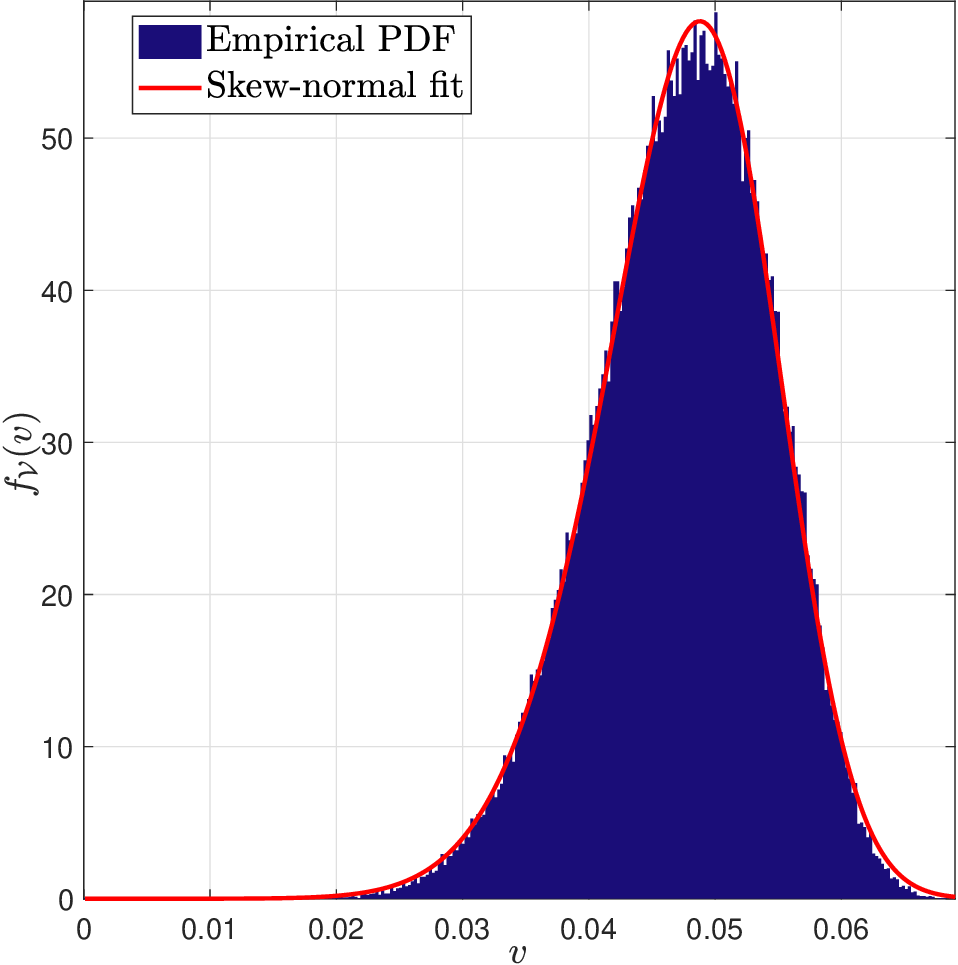}\vspace{-1mm}
			\caption{$T_s=5~\mu\mathrm{s}$.}
		\end{subfigure}

		\begin{subfigure}{.242\textwidth}
			\centering
			\includegraphics[width=\linewidth]{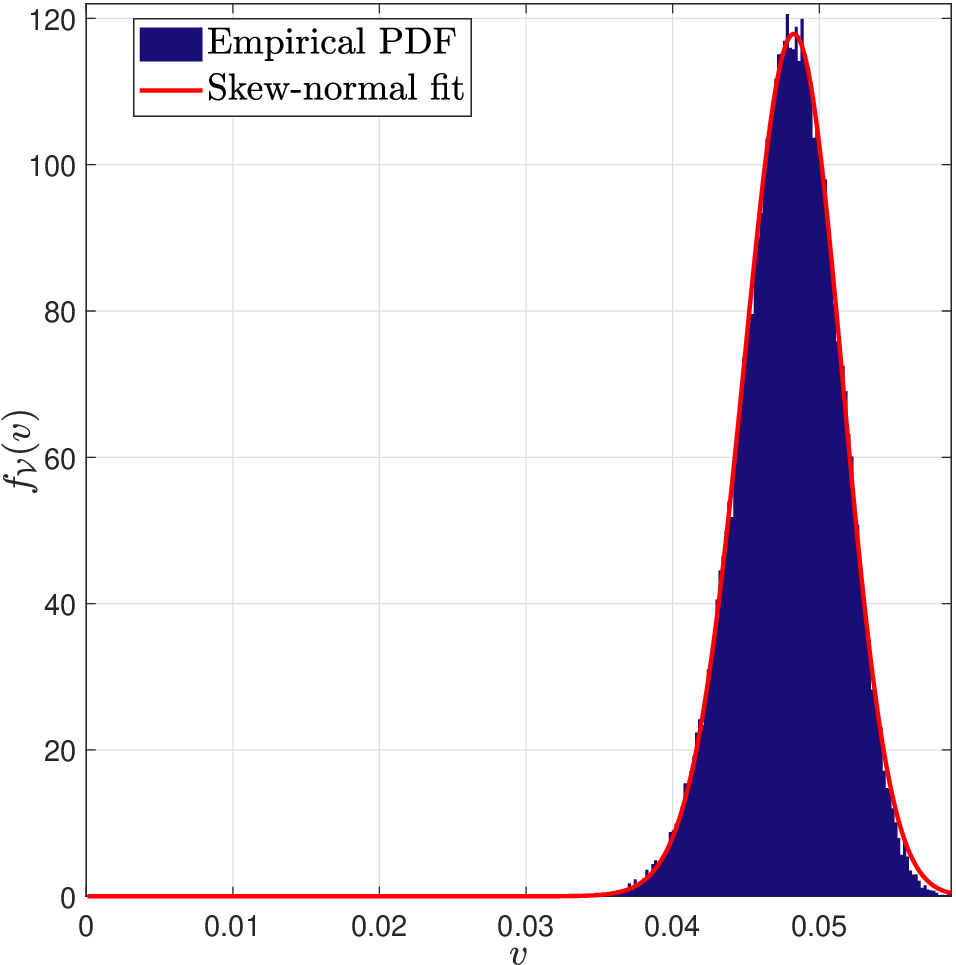}\vspace{-1mm}
			\caption{$T_s=1~\mu\mathrm{s}$.}
		\end{subfigure}\hfill
		\begin{subfigure}{.242\textwidth}
			\centering
			\includegraphics[width=\linewidth]{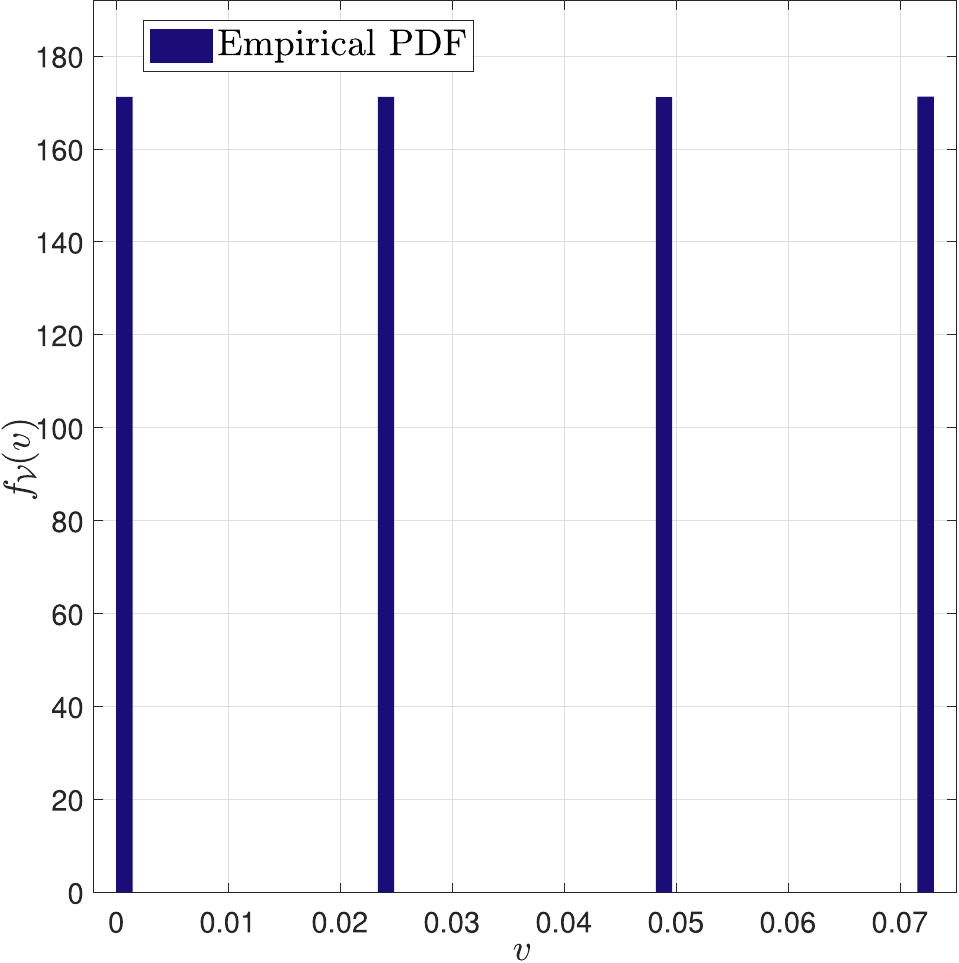}\vspace{-1mm}
			\caption{$T_s=1\mathrm{ms}$.} 
		\end{subfigure}\vspace{-1mm}
\caption{Empirical distribution of the rectifier state under state-adaptive \emph{information-based} signaling, for $P_T=-10~\mathrm{dBm}$ and (a) $T_s=10~\mu\mathrm{s}$, (b) $T_s=5~\mu\mathrm{s}$, (c) $T_s=1~\mu\mathrm{s}$, and (d) $T_s=1~\mathrm{ms}$.} %
		\label{fig:V0_hist}\vspace{-4mm}
	\end{figure}
	
			\begin{figure*}\centering
		\includegraphics[width=0.93\textwidth]{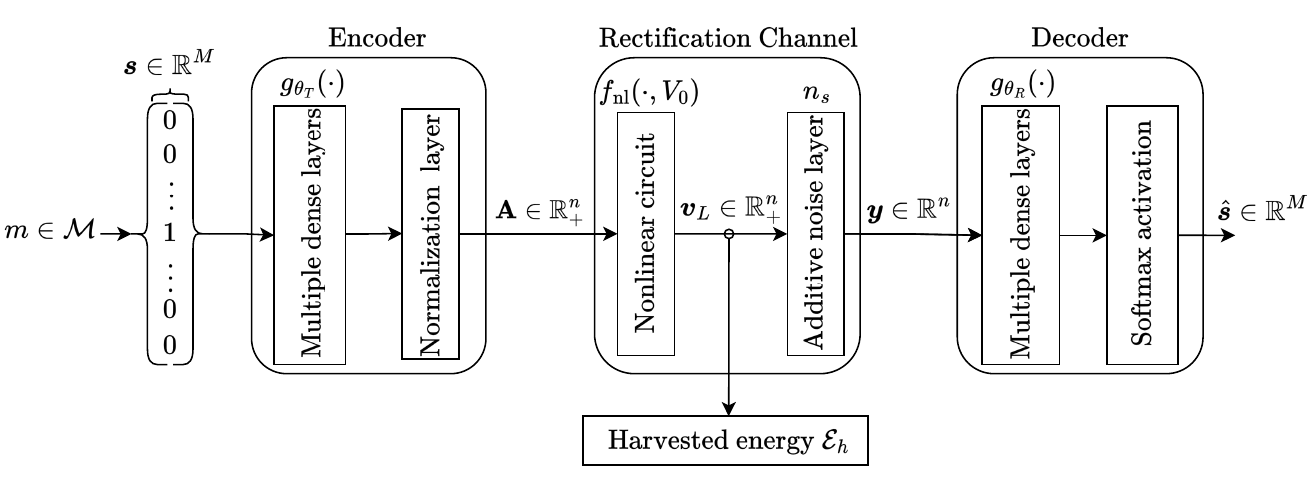}\vspace{-2mm}
		\caption{AE structure for UR-SWIPT.}\vspace{-5mm}
	\end{figure*}	
	
	Fig.~\ref{fig:V0_hist} shows the empirical distribution of the rectifier state for different symbol durations. For $T_s\in{1,5,10}~\mu\mathrm{s}$, the state is mainly concentrated around $v\approx 0.05~\mathrm{V}$, while lower-voltage states are less probable. This is a direct consequence of the asymmetric charging and discharging dynamics of the rectifier. In particular, moderate- and high-amplitude symbols can quickly charge the capacitor, while reaching a lower state requires several consecutive low-amplitude symbols, as also reflected in the voltage evolution in Fig.~\ref{fig:MemoryOutput}. As $T_s$ decreases, the capacitor has less time to change within each symbol interval, and the state distribution becomes narrower. For comparison, Fig.~\ref{fig:V0_hist}(d) shows the memoryless case with $T_s=1~\mathrm{ms}$. In this case, the rectifier reaches steady-state within each symbol interval, and the state reduces to a finite set of equiprobable voltage levels corresponding to the transmitted amplitudes.

	To obtain a tractable characterization of the rectifier state statistics, we approximate the empirical state distribution using a skew-normal model. Specifically, the PDF of $\mathcal{V}$ is fitted as
	\begin{equation}
		f_{\mathcal{V}}(v)
		\approx
		\frac{2}{\omega}
		\phi\!\left(\frac{v-\xi}{\omega}\right)
		\Phi\!\left(\alpha\frac{v-\xi}{\omega}\right),
		\label{eq:skew_normal_pdf}
	\end{equation}
	where $\xi$, $\omega$, and $\alpha$ denote the fitted location, scale, and shape parameters, respectively. As shown in Fig.~\ref{fig:V0_hist}, the skew-normal model provides a good fit in the memory regime of interest. 
	Substituting~\eqref{eq:ser_cond_V0} and~\eqref{eq:skew_normal_pdf} into~\eqref{eq:avg_SER_def}, the average SER under rectifier memory is approximated as
	\begin{equation}
		\begin{aligned}
			\overline{\mathrm{SER}}
			\approx
			\int_{-\infty}^{\infty}
			&\frac{2(M-1)}{M}
			Q\!\left(\frac{d_{\max}(v)}{2\sigma_s}\right) \\
			&\times
			\frac{2}{\omega}
			\phi\!\left(\frac{v-\xi}{\omega}\right)
			\Phi\!\left(\alpha\frac{v-\xi}{\omega}\right)
			\mathrm{d}v .
		\end{aligned}
		\label{eq:avg_ser_skew_normal}
	\end{equation}
	However, the integral in~\eqref{eq:avg_ser_skew_normal} is still not available in closed form since the optimized spacing $d_{\max}(v)$ depends on the rectifier state and has no explicit analytical expression. For analytical tractability, we approximate $d_{\max}(v)$ over the high-probability operating range of $\mathcal{V}$. In the considered regime, $d_{\max}(v)$ can be approximated as an affine function, yielding
	\begin{equation}
		d_{\max}(v)
		\approx
		d_0+d_1v,
		\label{eq:dmax_affine_approx}
	\end{equation}
	where $d_1=\frac{d_{\max}(v_{\max})-d_{\max}(v_{\min})}{v_{\max}-v_{\min}},$ $d_0=d_{\max}(v_{\min})-d_1v_{\min},$ with $v_{\min}$ and $v_{\max}$ denoting the lower and upper limits of the high-probability operating range. Under this approximation, the average SER admits a closed-form expression given in the following proposition.
	
	\begin{proposition}
		\label{prop:closed_form_ser_memory}
		Under the skew-normal state model in~\eqref{eq:skew_normal_pdf} and the affine approximation in~\eqref{eq:dmax_affine_approx}, the average SER of the state-adaptive information-based constellation is approximated as
		\begin{equation} 
		\overline{\mathrm{SER}}\! \approx \!\frac{4(M \! -1)}{M} \Phi_2 \!\left(\! 0, \!\frac{-A}{\sqrt{1\!+\! B^2}}; \!\frac{-\alpha B} {\sqrt{(1\!+\alpha^2)(1\!+\! B^2)}}\!\! \right)\!\!,
		\label{eq:avg_ser_closed_form_expanded} \end{equation}
		where $
			A=
			\frac{d_0+d_1\xi}{2\sigma_s},$ and $B=
			\frac{d_1\omega}{2\sigma_s}.$
	\end{proposition}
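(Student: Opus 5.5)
The plan is to reduce the integral in~\eqref{eq:avg_ser_skew_normal} to an expectation over a standard Gaussian variable and then read the result as a bivariate Gaussian probability. First, substitute the affine approximation~\eqref{eq:dmax_affine_approx} into~\eqref{eq:avg_ser_skew_normal}. We then extend it formally over the whole real line, as the statement intends; the tails outside $[v_{\min},v_{\max}]$ are where the approximation enters. Next, change variables to $z=(v-\xi)/\omega$. The argument of the $Q$-function becomes
\begin{equation*}
\frac{d_0+d_1\xi+d_1\omega z}{2\sigma_s}=A+Bz,
\end{equation*}
with $A$ and $B$ exactly as defined in Proposition~\ref{prop:closed_form_ser_memory}. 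The skew-normal density becomes $2\phi(z)\Phi(\alpha z)$. Using $Q(x)=\Phi(-x)$, we obtain
\begin{equation*}
\overline{\mathrm{SER}}\approx\frac{4(M-1)}{M}\int_{-\infty}^{\infty}\Phi(-A-Bz)\,\Phi(\alpha z)\,\phi(z)\,\mathrm{d}z
=\frac{4(M-1)}{M}\,\mathbb{E}_Z\!\left[\Phi(-A-BZ)\,\Phi(\alpha Z)\right],
\end{equation*}
where $Z\sim\mathcal{N}(0,1)$.

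The key step is a probabilistic representation of the two CDF factors. Introduce $U,W\sim\mathcal{N}(0,1)$, mutually independent and independent of $Z$. Then, conditionally on $Z$, we have $\Phi(-A-BZ)=\Pr(U\le -A-BZ\mid Z)$ and $\Phi(\alpha Z)=\Pr(W\le \alpha Z\mid Z)$. Conditional independence lets us multiply these, and removing the conditioning gives
\begin{equation*}
\mathbb{E}_Z\!\left[\Phi(-A-BZ)\Phi(\alpha Z)\right]=\Pr\!\left(U+BZ\le -A,\; W-\alpha Z\le 0\right).
\end{equation*}
The pair $(U+BZ,\,W-\alpha Z)$ is jointly Gaussian with zero mean. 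Its variances are $1+B^2$ and $1+\alpha^2$, and its covariance is $-\alpha B$. Normalizing each component to unit variance yields standard bivariate normal variables with correlation
\begin{equation*}
\rho=\frac{-\alpha B}{\sqrt{(1+\alpha^2)(1+B^2)}}.
\end{equation*}
The thresholds become $-A/\sqrt{1+B^2}$ and $0$. The probability is therefore $\Phi_2\bigl(-A/\sqrt{1+B^2},0;\rho\bigr)$. Since $\Phi_2$ is symmetric in its two arguments, this equals the expression in~\eqref{eq:avg_ser_closed_form_expanded}.

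The main obstacle is the evaluation of $\int \Phi(a+bz)\Phi(cz)\phi(z)\,\mathrm{d}z$ itself. A direct attack by differentiating in $A$ or by series expansion is messy, so the auxiliary-variable trick above is what makes the computation clean. Beyond that, the only care needed is with sign conventions, namely $Q(x)=\Phi(-x)$ and the sign of the covariance. We also need to remark that the remaining approximations come solely from the skew-normal fit~\eqref{eq:skew_normal_pdf} and from extending the affine model~\eqref{eq:dmax_affine_approx} beyond its fitting range. The latter is harmless because $f_{\mathcal{V}}$ places little mass outside that range.
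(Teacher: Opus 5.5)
Your proof is correct and takes the same route as the paper. Both change variables to $z=(v-\xi)/\omega$, write the $Q$-factor and $\Phi(\alpha z)$ as conditional probabilities of auxiliary independent standard normals, and read the result as a standard bivariate Gaussian CDF with correlation $-\alpha B/\sqrt{(1+\alpha^2)(1+B^2)}$. The differences are only cosmetic: you use $Q(x)=\Phi(-x)$ and the variable $U+BZ$ instead of the paper's $BG-W$, and you then invoke the symmetry of $\Phi_2$ to match the argument order.
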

	
	\begin{proof}
		See Appendix~\ref{app:ser_skew_normal}.
	\end{proof}
	
Proposition~1 provides a compact approximation for the average SER and shows how it is affected by the rectifier state distribution and the state-dependent spacing $d_{\max}(v)$. In the considered regime, $d_{\max}(v)$ generally decreases with $v$, so larger rectifier states result in smaller voltage spacing and higher SER. If $d_1=0$, the spacing is fixed and the result reduces to the standard AWGN SER.

The above \emph{information-based} setting characterizes how rectifier memory affects ID. However, in UR-SWIPT, the same rectified output is also used for EH. Hence, the constellation must not only provide reliable detection, but also generate sufficient harvested energy. In the next section, we formalize this requirement through a minimum EH constraint and develop an AE-based framework to address the resulting joint ID-EH constellation design.

	\section{Energy-constrained Constellations: An AE-based Framework}
	
	We now study how a minimum EH requirement affects the \emph{energy-constrained} constellation structure. For demonstration, we adopt the memoryless steady-state regime, where the AE learns the constellation over the nonlinear mapping $\bar v_L(\cdot)$ given in~\eqref{eq:vbar_closed_form}. To keep this effect isolated, we consider one channel use per message.\footnote{The same one-step learning framework can also be applied under memory by replacing $\bar v_L(\cdot)$ with the state-dependent mapping $f_{\mathrm{nl}}(\cdot,V_0)$ for a given rectifier state.} Sequence-level learning over the full state-dependent mapping is left for future work. We first formulate the corresponding constrained SER minimization problem and then present the AE-based implementation.

\vspace{-2mm}
\subsection{Problem Formulation}\vspace{-0.5mm}
 Let $\mathrm{SER}\left(\mathbf{A};\mathcal{D}(\cdot)\right)$ denote the SER achieved by the constellation $\mathbf{A}=\{A_0,A_1,\ldots,A_{M-1}\}$, with $A_m\ge0$, under decoding rule $\mathcal{D}(\cdot)$. 
We formulate the design problem as 
\begin{subequations} 
	\begin{align} \min_{\mathbf{A},\mathcal{D(\cdot)}} \quad & \mathrm{SER}\!\left(\mathbf{A};\mathcal{D(\cdot)}\right) \label{eq:opt_obj} \\ \text{s.t.}\quad & \frac{1}{M}\sum_{m=0}^{M-1}A_m^2 \le 1 \label{eq:opt_power} \\[1mm] & \mathcal{E}_h(\mathbf{A}) \ge \mathcal{E}_t, \label{eq:opt_eh} \end{align} \label{eq:opt_prob} 
\end{subequations}
$\!\!\!\!\!$ where $\mathcal{E}_t$ denotes the required harvested energy. The above problem is nonconvex since both the SER and the EH constraint depend on the constellation through the nonlinear rectification mapping $\bar v_L(\cdot)$. In particular, the sampled voltage levels are nonlinear functions of the transmit amplitudes, while the EH constraint depends on their squared rectified values. Therefore, a closed-form solution is generally not available. To tackle this, we model the end-to-end UR-SWIPT link as an AE. In the following, we describe the proposed AE architecture and its training procedure.

	\vspace{-3mm}
	
	\subsection{Autoencoder Structure for UR-SWIPT}\vspace{-0.5mm}
	
To implement the proposed \emph{energy-constrained} design, we represent the end-to-end UR-SWIPT link as an AE, as shown in Fig.~5. The encoder models the transmitter, the rectification channel represents the nonlinear rectification, and the decoder performs ID from the noisy sampled voltages. The transmitter communicates one of $M$ equiprobable messages, $m\in\mathcal{M}=\{0,1,\ldots,M-1\}$. Each message is represented by a one-hot vector $\boldsymbol{s}\in\mathbb{R}^M$, whose only nonzero entry is at the position corresponding to $m$. After the nonlinear rectification, the receiver (or decoder) jointly harvests energy and decodes information from the rectified signals.

The encoder maps the one-hot vector  $\boldsymbol{s}$ into a non-negative amplitude vector $\mathbf{A}\in\mathbb{R}_+^n$, where $n$ denotes the channel uses. The mapping from the set of messages $\mathcal{M}$ to the signal space $\mathcal{X}^n$ is defined by a parametric function $g_{\theta_T}(\cdot): \mathcal{M} \rightarrow \mathbb{R}^n$, where $\theta_T$ represents the set of encoder parameters, including the weights and biases across the layers of the encoder. To satisfy the average transmit power constraint, the encoder employs a normalization layer as its final stage. Since the encoded messages $\mathbf{A}$ are constrained to non-negative amplitudes, this layer outputs the normalized absolute value of its input. 

The encoded amplitude vector passes through the nonlinear rectification channel $f_{\mathrm{nl}}(\cdot,V_0)$, as shown in Fig.~5. In the considered steady-state regime, this mapping reduces to $\bar v_L(\cdot)$ and becomes independent of $V_0$.\footnote{In this work, the function \(f_{\mathrm{nl}}(\cdot,V_0)\) is known. If it is obtained from measurements, an interpolated differentiable approximation can be used for backpropagation. This further motivates the learning-based approach, as it enables constellation design tailored to the actual rectifier characteristics.} Here, the Lambert \(\mathcal{W}\) term in~\eqref{eq:vbar_closed_form} is evaluated using a fixed number of differentiable Newton iterations, which enables backpropagation through the rectifier model. The resulting sampled voltage vector $\boldsymbol{v}_L\in\mathbb{R}_+^n$ is used to compute the harvested energy $\mathcal{E}_h$, while its noisy version,
$\boldsymbol{y}=\boldsymbol{v}_L+\boldsymbol{n}_s$, is passed to the decoder for ID. The decoder, parameterized by $g_{\theta_R}(\cdot)$, maps $\boldsymbol{y}\in\mathbb{R}^n$ to a probability vector $\hat{\boldsymbol{s}}\in\mathbb{R}^M$, where $\theta_R$ denotes the decoder weights and biases. The detected message is obtained by selecting the index with the largest output probability, \textit{i.e.,} $\hat{m}=\arg\max_i \hat{s}_i$. The AE's layout is summarized in Table~I.

	\begin{table}[!t]
		\centering
		\caption{\\ \small Layout of the UR-SWIPT autoencoder}\small
		\begin{tabular}{l c}
			\hline\hline
			\textbf{Layers} & \textbf{Output shape} \\
			\hline
			\textbf{Encoder:} & \\
			\quad Input (One-hot) & $\mathbb{R}^M$ \\
			\quad Dense (Leaky ReLU) & $\mathbb{R}^M$ \\
			\quad Dense (Linear) & $\mathbb{R}^n$ \\
			\quad Normalization & $\mathbb{R}^n$ \\ 
			\hline
			\textbf{Channel:} & \\
			\quad Non-trainable function & $\mathbb{R}^n$ \\
			\quad AWGN & $\mathbb{R}^n$ \\
			\hline
			\textbf{Decoder:} & \\
			\quad Dense (Leaky ReLU) & $\mathbb{R}^M$ \\
			\quad Dense (Softmax) & $\mathbb{R}^M$ \\
			\hline\hline
		\end{tabular}\vspace{-5mm}
	\end{table}
	
\vspace{-3mm}

	\subsection{Training the Autoencoder}
	
The encoder and decoder parameters are trained to improve ID reliability while satisfying the EH requirement. To this end, we model the information loss as the cross-entropy function between the transmitted one-hot vector $\boldsymbol{s}$ and the decoder output probability vector $\hat{\boldsymbol{s}}$, given by
\begin{equation}
	L_{\mathrm{CE}}(\boldsymbol{s},\hat{\boldsymbol{s}})=-\sum_{i=1}^{M}s_i\log \hat{s}_i,
\end{equation}
where $s_i$ and $\hat{s}_i$ denote the $i$-th entries of $\boldsymbol{s}$ and $\hat{\boldsymbol{s}}$, respectively. To incorporate the EH constraint in~\eqref{eq:opt_eh}, we add a one-sided penalty that is active only when the average harvested energy is below the target $\mathcal{E}_t$. The resulting mini-batch loss is 
\begin{equation}
	L(\theta_T,\theta_R)\!
	=\!
	\frac{1}{|\mathcal{B}_m|}
	\sum_{k\in\mathcal{B}_m}\!\!\!
	L_{\mathrm{CE}}\bigl(\boldsymbol{s}^{(k)},\hat{\boldsymbol{s}}^{(k)}\bigr)
	+
	\lambda\left[\mathcal{E}_t-\bar{\mathcal{E}}_h\right]_+^2,
	\label{eq:ae_loss}
\end{equation}
where $[x]_+=\max\{x,0\}$, $\mathcal{B}_m$ denotes the mini-batch, $\lambda\geq 0$ controls the weight of the EH penalty, and
\begin{equation}
	\bar{\mathcal{E}}_h
	=
	\frac{1}{|\mathcal{B}_m|}
	\sum_{k\in\mathcal{B}_m}
	\mathcal{E}_h\bigl(\mathbf{A}^{(k)}\bigr)
	\label{eq:batch_energy}
\end{equation}
is the average harvested energy over the mini-batch. For the $k$-th training sample, $\boldsymbol{s}^{(k)}$ is the one-hot vector, $\hat{\boldsymbol{s}}^{(k)}$ is the decoder output, and $\mathbf{A}^{(k)}$ is the corresponding encoded amplitude vector.

In practice, the AE is trained by minimizing~\eqref{eq:ae_loss} over randomly generated message samples using mini-batch stochastic optimization with the Adam optimizer~\cite{Adam}. When $\lambda=0$, the EH penalty is removed and the AE learns an \emph{information-based} constellation that prioritizes ID reliability. For $\lambda>0$, the penalty encourages the learned constellation to satisfy the target harvested energy $\mathcal{E}_t$ while preserving ID reliability. The value of $\lambda$ is selected empirically so that the EH constraint is satisfied without overshadowing the cross-entropy term. The training signal-to-noise ratio (SNR) is also an important design parameter and is selected according to the transmit power $P_T$, the noise variance, and the constellation size $M$. To improve robustness, the SNR can alternatively be sampled from a predefined range during mini-batch training.

\section{Numerical Results}

In this section, we evaluate the proposed constellation design framework under both memory and memoryless operation. We first consider the memory regime and assess the state-adaptive \emph{information-based} constellation in terms of average SER and the resulting R-R tradeoff. We then consider steady-state operation, where the AE-based framework is evaluated for both \emph{information-based} and \emph{energy-constrained} constellation design. In the \emph{information-based} case, the learned constellations are compared with the algorithmic design to provide a reference solution. Finally, the learned \emph{energy-constrained} constellations are used to characterize the resulting R-E tradeoff.

\subsection{Simulation Setup}
\label{subsec:simulation_setup}

We consider amplitude modulation with non-negative levels and one channel use per message, \textit{i.e.,} $n=1$. Unless otherwise stated, the constellation size is selected from $M\in\{4,8,16\}$. The AE is trained separately for each considered transmit power level and EH requirement. Each training epoch consists of $M\times 10^{7}$ transmitted symbols, with batch size $M\times 10^{4}$ and $10^{3}$ steps per epoch. The networks are trained for 50 epochs using the Adam optimizer with initial learning rate $10^{-3}$. The AE architecture is summarized in Table~I. In addition, the rectifier is modeled as a Schottky-based half-wave rectifier with parameters $\eta=1.05$, $V_T=25.85~\mathrm{mV}$, $I_s=5~\mu\mathrm{A}$, $R_{\mathrm{in}}=50~\Omega$, and $R_L=8.25~\mathrm{k}\Omega$~\cite{Dual}. For the memory-aware simulations, the capacitance is set to $C=10~\mathrm{nF}$. Finally, the fitted parameters of the discrete-time memory model and the skew-normal approximation of the rectifier state distribution are reported in Table~\ref{tab:fitted_memory_params}.

	\subsection{Operation with Memory: Average SER and R-R Tradeoff}
	\label{subsec:memory_results}
	
	We now evaluate the proposed state-adaptive information-based design in the presence of rectifier memory. Unless otherwise stated, we set $M=4$ and $P_T=-10$ dBm, and vary the symbol duration over $T_s\in\{1,3,5,7,10\}\, \mu\mathrm{s}$. For the state-adaptive policy, the rectifier state distribution is obtained by simulating the transmission process and is then fitted using the skew-normal model in (18). For the fixed benchmark, a separate Monte Carlo simulation is performed, since the fixed constellation induces a different state distribution. For the affine approximation, $v_{\min}$ and $v_{\max}$ are chosen to cover the $90\%$ of the rectifier states.

		\begin{table}[t]
		\centering
		\caption{Fitted memory-aware model and skew-normal distribution parameters.}
		\label{tab:fitted_memory_params}
		\begin{tabular}{c c c c c c}
			\hline\hline
			$T_s~\!\!\!$ 
			& $\alpha_f \!\!\!$ 
			& $\beta_f \!\!\!$ 
			& $\xi\!\!\!$ 
			& $\omega\!\!\!$ 
			& $\alpha\!\!\!$ \\[1mm]
			\hline
			$1\, \mu\mathrm{s}$  
			& $0.3763$ & $0.3752$ 
			& $5.03{\times}10^{-2}$ 
			& $4.07{\times}10^{-3}$ 
			& $-0.9459$ \\[1mm]
			$3\, \mu\mathrm{s}$  
			& $0.4794$ & $0.4762$ 
			& $5.23{\times}10^{-2}$ 
			& $7.15{\times}10^{-3}$ 
			& $-1.3957$ \\[1mm]
			$4\, \mu\mathrm{s}$  
			& $0.4794$ & $0.4782$ 
			& $5.32{\times}10^{-2}$ 
			& $8.57{\times}10^{-3}$ 
			& $-1.5225$ \\[1mm]
			$5\, \mu\mathrm{s}$  
			& $0.4809$ & $0.4812$ 
			& $5.41{\times}10^{-2}$ 
			& $9.81{\times}10^{-3}$ 
			& $-1.6133$ \\[1mm]
			$7\, \mu\mathrm{s}$  
			& $0.4866$ & $0.4750$ 
			& $5.54{\times}10^{-2}$ 
			& $1.21{\times}10^{-2}$ 
			& $-1.6711$ \\[1mm]
			$10\, \mu\mathrm{s}$ 
			& $0.4427$ & $0.4844$ 
			& $5.90{\times}10^{-2}$ 
			& $1.62{\times}10^{-2}$ 
			& $-2.3184$ \\[1mm]
			\hline\hline
		\end{tabular}
	\end{table}

Fig.~6 validates the average SER characterization of the state-adaptive policy. The markers correspond to the numerical evaluation of (19), while the solid curves are obtained from the closed-form approximation in (21). The close agreement between the two results shows that the skew-normal approximation of the state distribution, together with the affine approximation of $d_{\max}(v)$, is sufficiently accurate in the considered memory regime. Therefore, (21) provides a compact and accurate way to predict the average SER without repeatedly averaging. The figure also shows the impact of $T_s$ on reliability. As $T_s$ increases, the capacitor has more time to move toward the steady-state voltage associated with each transmitted symbol. This reduces memory-induced voltage compression and increases the separation between adjacent sampled voltage levels. Consequently, the same SER can be achieved at a lower SNR, however, this improvement comes at the cost of a lower symbol rate, which motivates the R-R tradeoff studied below.

Fig.~7 compares the proposed state-adaptive design with a fixed steady-state constellation. The benchmark constellation is designed from the memoryless mapping $\bar v_L(\cdot)$ and then evaluated over the memory channel. At high SNR, the state-adaptive design provides a clear gain. This is consistent with the design criterion in Algorithm~1, which maximizes the feasible distance between adjacent sampled voltage levels for each rectifier state. In this regime, detection errors are mainly governed by the minimum separation at the decoder input, and adapting the constellation to the current capacitor voltage is therefore beneficial. At low SNR, however, the fixed steady-state constellation can perform better, since the state-adaptive policy is based on a minimum-distance criterion and is therefore mainly tailored to the moderate-to-high SNR regime. When sampling noise is dominant, the SER depends not only on the minimum adjacent spacing, but also on the full set of decision regions and on the state distribution induced by the transmitted symbols.

Fig.~8 shows the resulting R-R tradeoff under rectifier memory. The effective throughput is defined as
\begin{equation}
R_{\rm eff}=\frac{\log_2(M)}{T_s}\left(1-\overline{\rm SER}\right),
\end{equation}
where $1-\overline{\rm SER}$ represents the average reliability. Equation (26) highlights the dual role of $T_s$ on the effective throughput. A shorter $T_s$ increases the symbol rate, but gives the capacitor less time to charge or discharge, which compresses the sampled voltage levels and reduces reliability. A longer $T_s$ improves the voltage separation, but lowers the symbol rate. For the considered parameters, this rate loss is stronger than the corresponding reliability gain in both SNR regimes. Therefore, the effective throughput decreases as $T_s$ increases, showing that shorter symbol durations are preferable.

		 \begin{figure}[t] \centering \includegraphics[width=\linewidth]{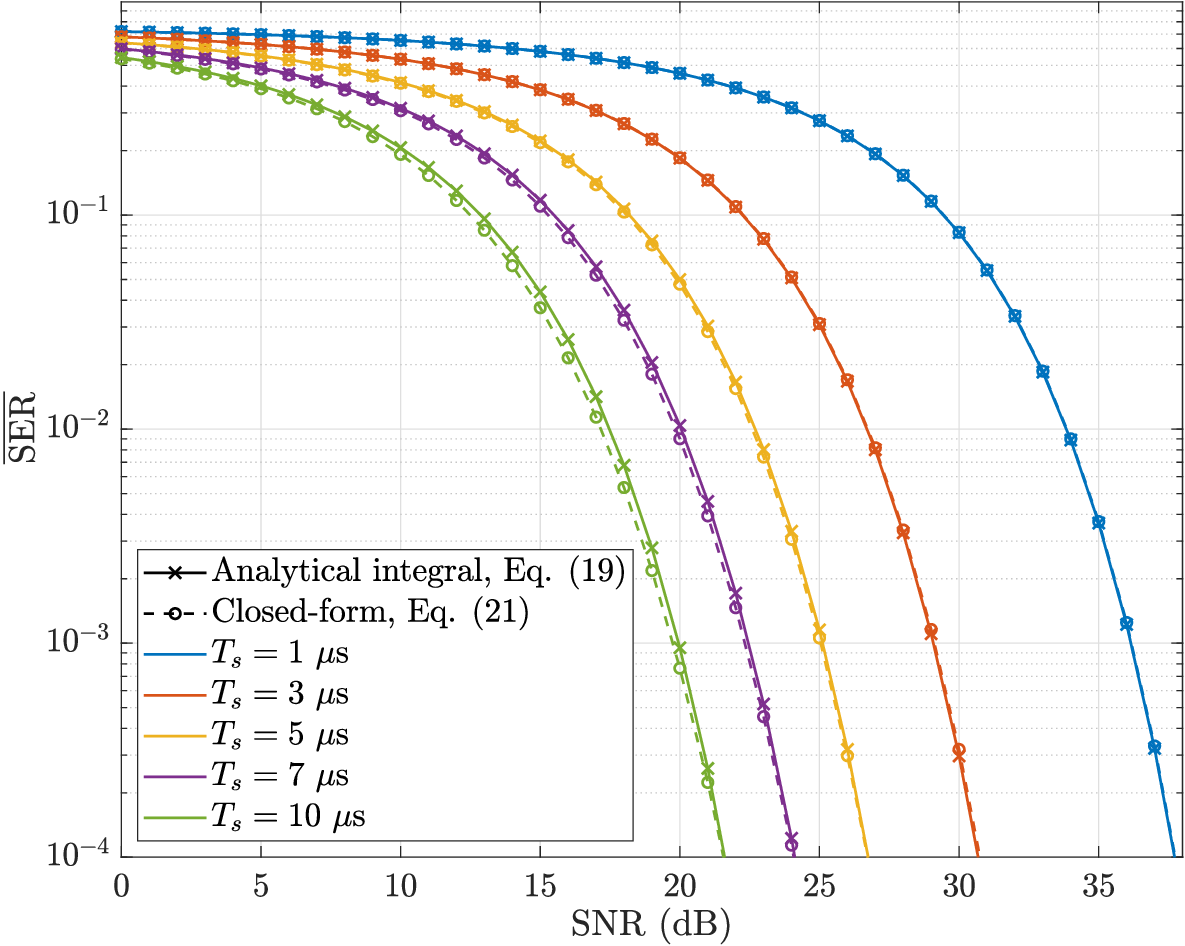}\vspace{-1mm} \caption{Closed-form $\overline{\mathrm{SER}}$ validation for the state-adaptive policy.} \label{fig:closedform_validation_sub}\vspace{-4mm} 
	\end{figure} 
	
	\begin{figure}[t] \centering
		\includegraphics[width=\linewidth]{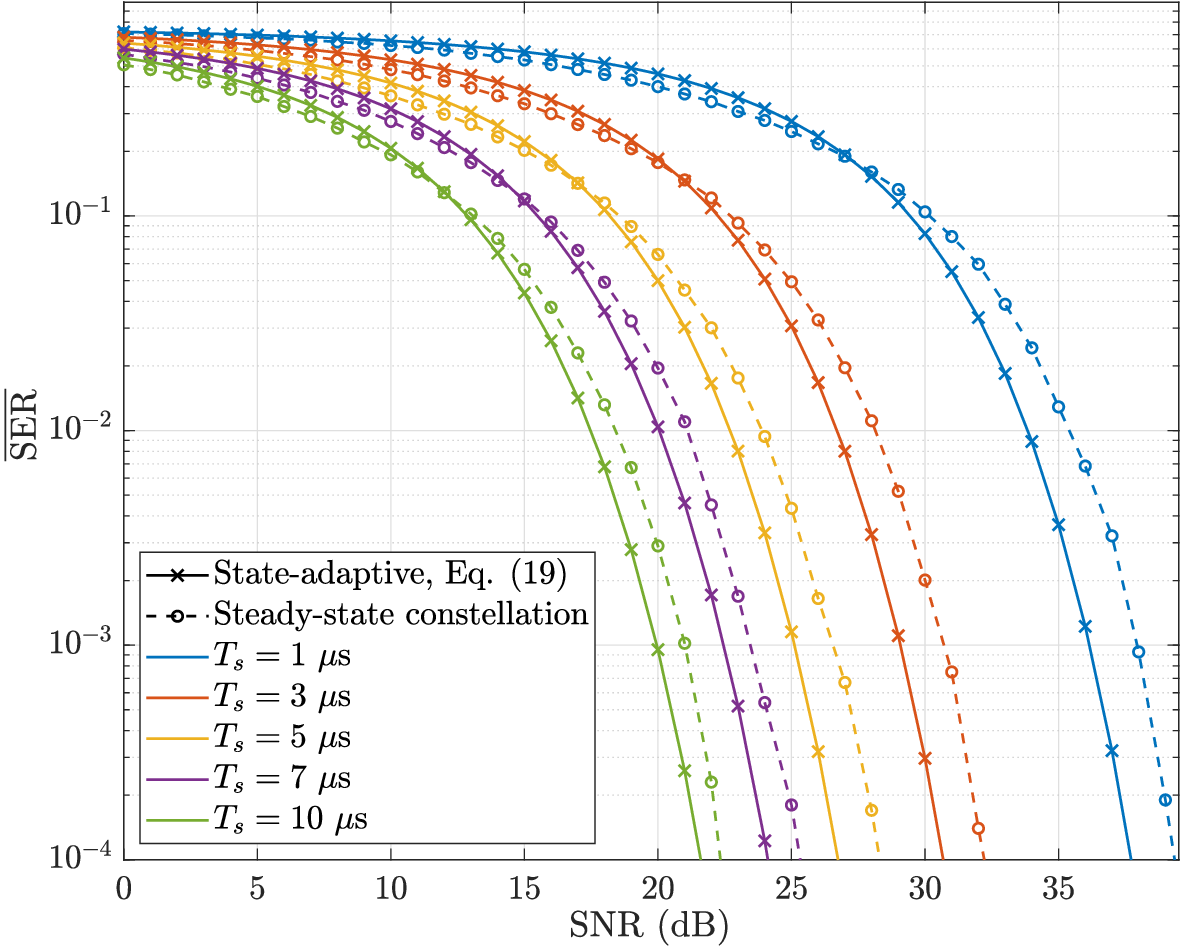}\vspace{-1mm}
		\caption{$\overline{\mathrm{SER}}$ comparison between the state-adaptive design and the fixed steady-state constellation.}
		\label{fig:ser_benchmark_memory_sub}\vspace{-4mm} 
	\end{figure}
	
	\subsection{Memoryless Operation: Information-based Constellations and SER Performance}
	\label{subsec:memoryless_info_results}
	
We now investigate the memoryless steady-state case by selecting $T_s=1~\mathrm{s}$. The learned constellations are obtained by separately training different AEs for every different $P_T$ level. In Fig.~\ref{fig:ae_constellations}(a) we plot the learned \emph{information-based} constellations ($\lambda=0$), along with the ones provided by the algorithmic approach, for $M\in\{4,8\}$ and various $P_T$ levels. It is expected that for different $P_T$, we operate in different rectification regimes, since the rectification function given by~\eqref{eq:vbar_closed_form} depends on $P_T$. We can observe that for high $P_T$ values (\textit{e.g.}, $P_T=10$ dBm), the learned constellations consist of one zero-amplitude point and $M-1$ further separated points, with equal distance between them. On the other hand, for small $P_T$ values (\textit{e.g.}, $P_T=-20$ dBm), the learned constellations for $M=\{4,8\}$ still have a zero-amplitude point, however the remaining $M-1$ points are not equally separated. In particular, the distance between consecutive points gradually decreases. The above observations result from the dependency of the nonlinear rectification characteristics with the transmit power $P_T$. In fact, for high values of $P_T$, the expression given in (4) becomes somewhat linear for high amplitude inputs, while for low $P_T$ values becomes exponentially increasing.

 \begin{figure}[t] \centering 
 	\begin{subfigure}{.493\linewidth} \centering \includegraphics[width=\linewidth]{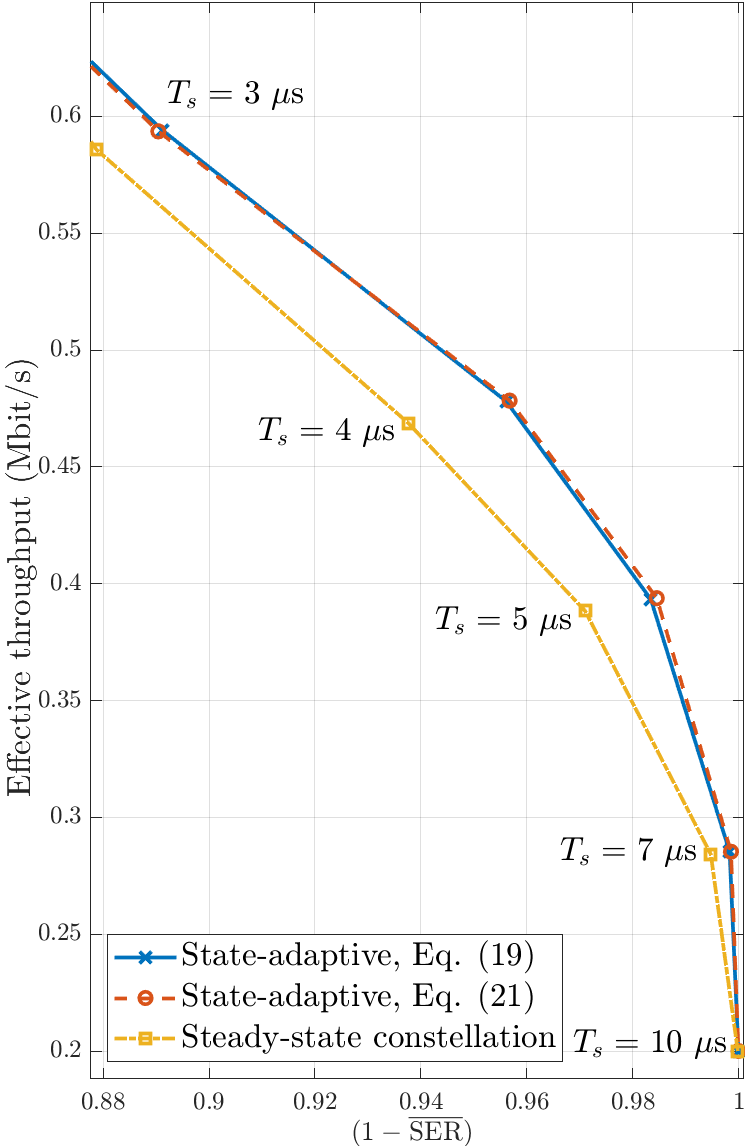}\vspace{-1mm} \caption{High SNR regime, $\sigma_s^2=-31~\mathrm{dBm}$.} \label{fig:rr_tradeoff_high_sub} 
	\end{subfigure} 
	\begin{subfigure}{.493\linewidth} \centering \includegraphics[width=\linewidth]{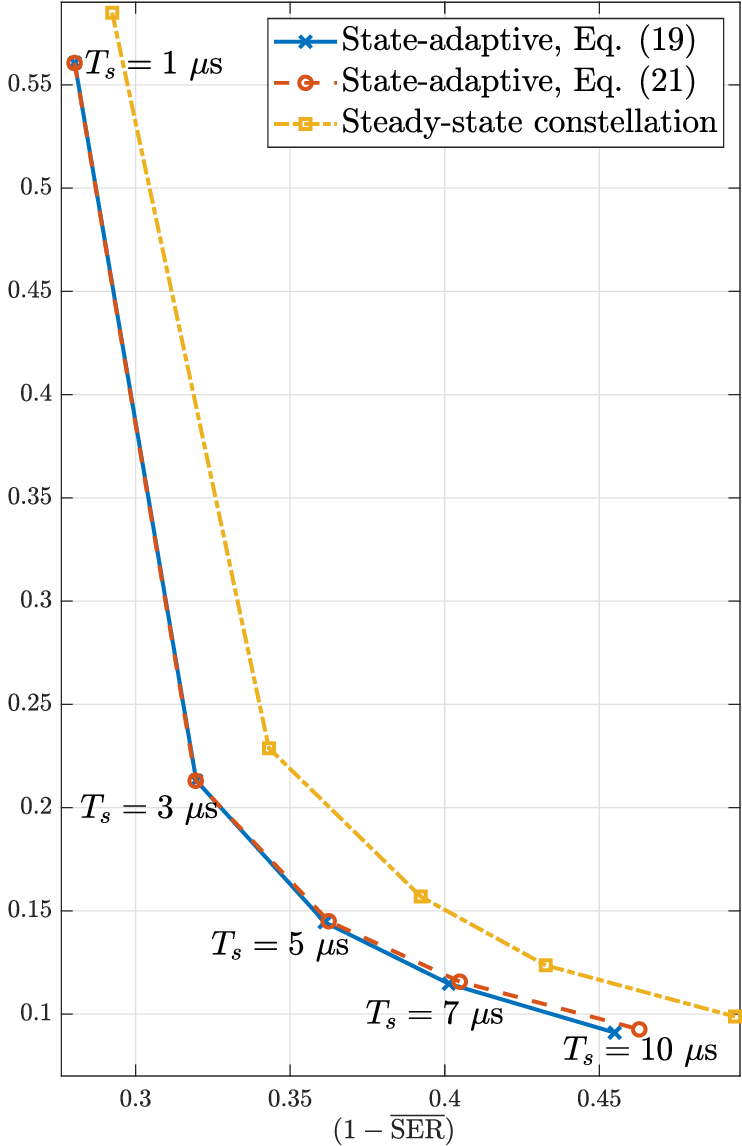}\vspace{-1mm} \caption{Low SNR regime, $\sigma_s^2=-10~\mathrm{dBm}$.} \label{fig:rr_tradeoff_low_sub} 
	\end{subfigure} 
	\caption{R-R tradeoff under rectifier memory for different symbol durations.} \label{fig:rr_tradeoff_memory}
	\vspace{-4mm} 
\end{figure} 

\begin{figure*}[t] \centering \begin{subfigure}{.49\textwidth} \centering \includegraphics[width=\linewidth]{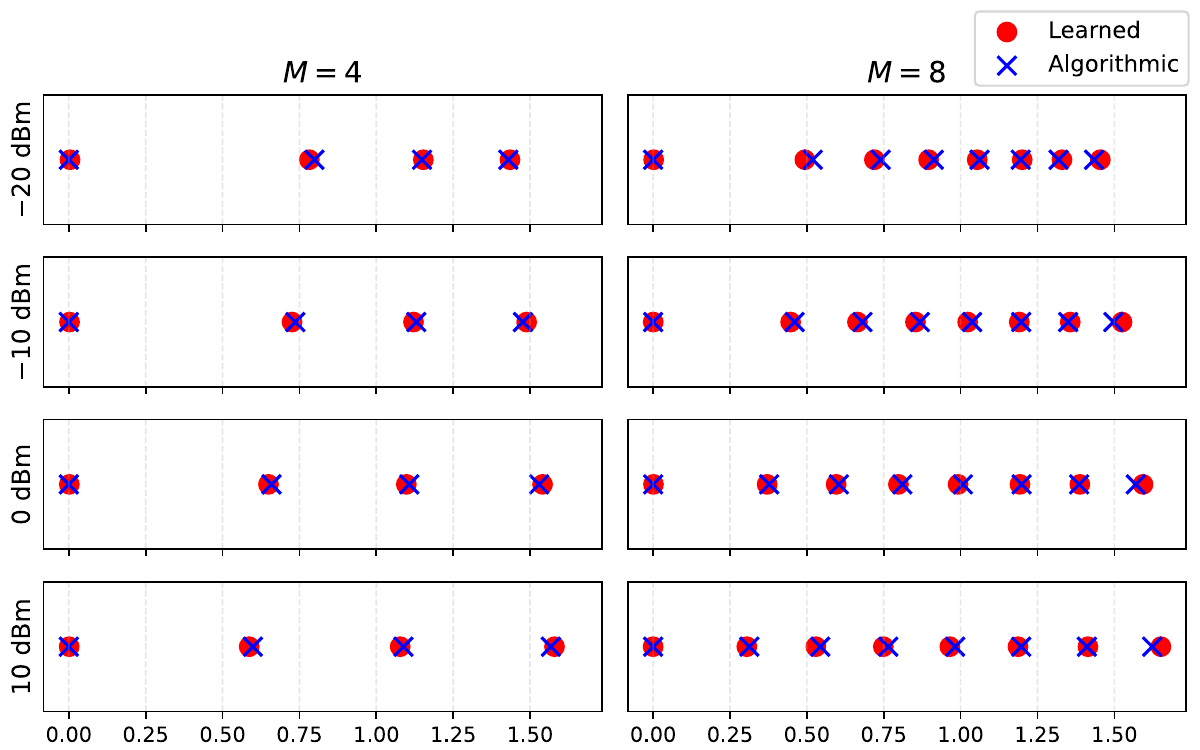} \caption{Information-based constellations.} \label{fig:ae_constellations_info} 
\end{subfigure}
\begin{subfigure}{.49\textwidth} \centering \includegraphics[width=\linewidth]{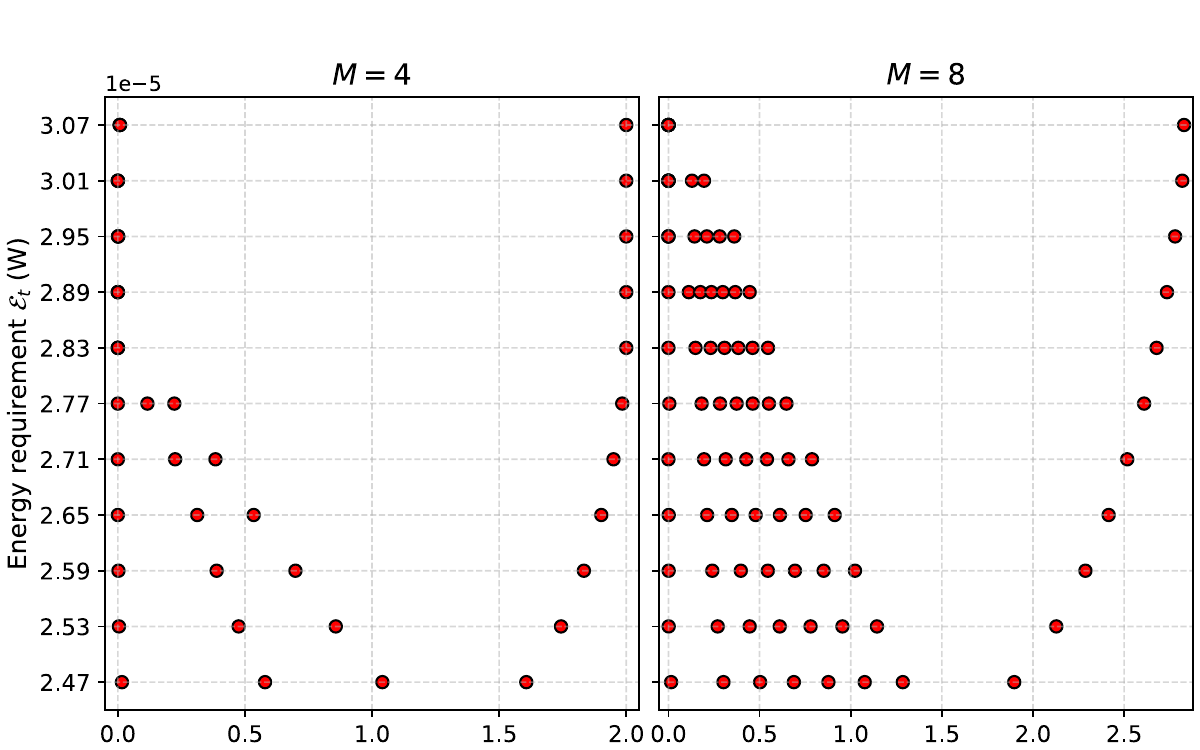} \caption{Energy-constrained constellations.} \label{fig:ae_constellations_energy} \end{subfigure}\vspace{-1mm} \caption{Learned constellations obtained with the proposed AE-based framework. (a) Information-based constellations ($\lambda=0$) for different transmit power levels $P_T$ and constellation sizes $M\in\{4,8\}$, compared with the algorithmic construction. (b) Energy-constrained constellations learned for $M\in\{4,8\}$ under different EH requirements $\mathcal{E}_t$, at fixed transmit power $P_T=5~\mathrm{dBm}$.} \label{fig:ae_constellations}\vspace{-4mm} 
\end{figure*}

 \begin{figure}[t] \centering \includegraphics[width=\linewidth]{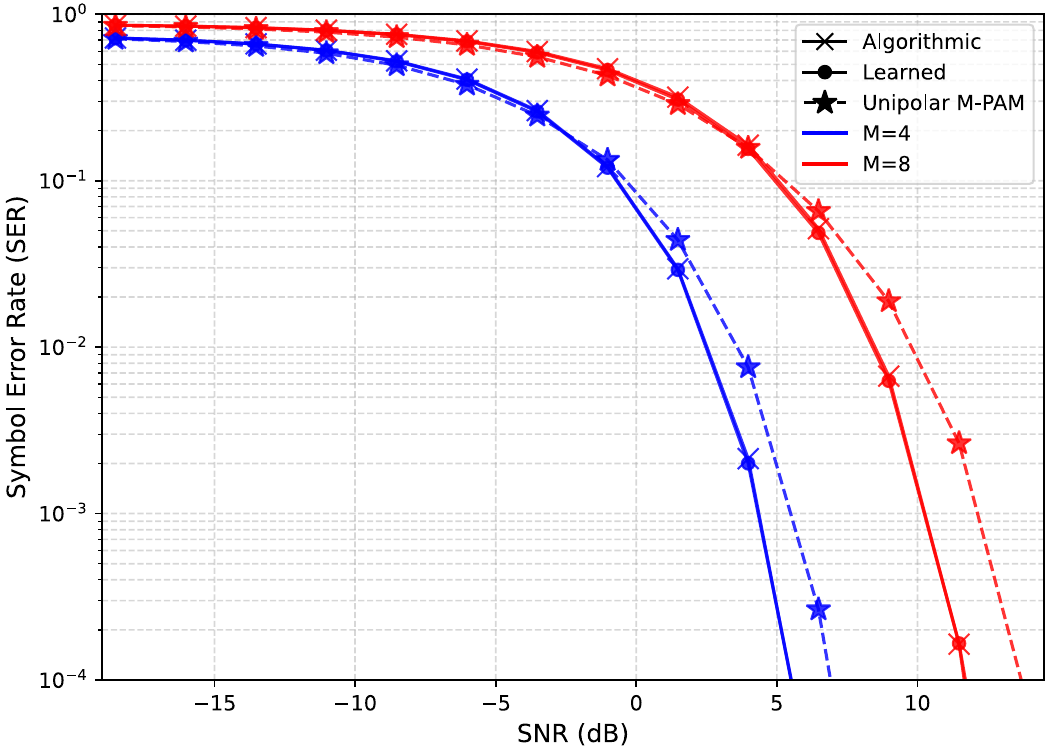}\vspace{-1mm} \caption{SER performance of the learned and algorithmic information-based constellations, compared against unipolar $M$-PAM for different transmit power levels.} \label{fig:ser_memoryless_info}\vspace{-4mm} 
\end{figure} 

In Fig.~\ref{fig:ser_memoryless_info} we plot the SER curves with respect to $\text{SNR}=\frac{P_T}{\sigma_s^2}$, under a fixed noise variance $\sigma_s^2= -4$ dBm, for $M\in\{4,8\}$. Both the learned and the algorithmic \emph{information-based} constellations are compared against unipolar pulse amplitude modulation (PAM), where $x_i=i \Delta$, with $i\in \{0,1,\ldots,M-1\}$ and $\Delta=\sqrt{\frac{6}{\left(M-1\right)\left(2M-1\right)}}$. The gains provided by the learned constellations can be realized in the high SNR regime, where they consistently outperform unipolar $M$-PAM under the same UR-SWIPT architecture, improving up to an order-of-magnitude the SER performance. As expected, the minimum Euclidean distance is reduced as $M$ increases, thereby degrading SER performance. Overall, the efficiency of our proposed learning approach can be confirmed, since the learned \emph{information-based} constellations and their corresponding SER curves match with those provided by the algorithmic approach.

 \subsection{Memoryless Operation: Energy-constrained Constellations and R-E Tradeoff}
 \label{subsec:memoryless_energy_results} 
 
 We now examine the R-E tradeoff in the presence of an energy requirement \(\mathcal{E}_{t}\) and study how different EH constraints shape the resulting \emph{energy-constrained} constellations. To this end, we train the AE using the loss function in (24), with sufficiently high $\lambda$. In Fig.~\ref{fig:ae_constellations}(b), we plot the learned constellations for various \(\mathcal{E}_{t}\) with fixed \(P_T=5\) $\mathrm{dBm}$ and $\lambda$ varying from \(10^{12}\) to \(10^{14}\). We note that \(\lambda\) depends on \(\mathcal{E}_{t}\), the constellation cardinality \(M\), and the transmit power \(P_T\). Each constellation is obtained by training a separate AE for the specified \(\mathcal{E}_{t}\), starting from a minimum value \(\mathcal{E}_{\min}\) corresponding to the harvested energy provided by the \emph{information-based} constellation. We can observe that by increasing the energy requirements, the resulting \emph{energy-constrained} constellations are characterized by a higher PAPR, until they finally converge to OOK-like signaling which is known to be optimal for EH purposes (low probability on ON and high probability on OFF \cite{Morsi}). Interestingly, the learned constellations tend to concentrate power into a single “power-symbol” while reducing the amplitudes of the remaining non-zero symbols to satisfy the average-power constraint. This behavior can be explained by the rectifier's exponential trend, where the presence of a “power-symbol” enables the system to meet the EH requirements while allowing for reliable decoding.

  \begin{figure}[t] 
 	\centering \includegraphics[width=\linewidth]{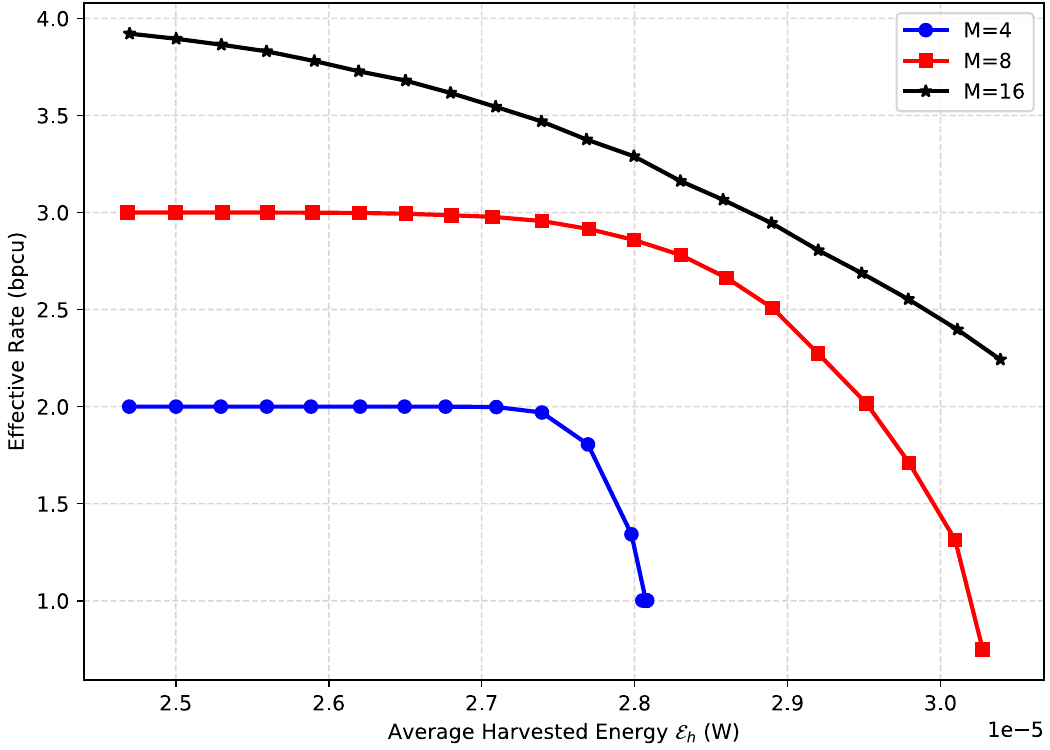}\vspace{-1mm} \caption{Effective R-E tradeoff achieved by the learned energy-constrained constellations for different EH requirements, with $M\in\{4,8,16\}$ and $P_T=5~\mathrm{dBm}$.} \label{fig:re_tradeoff_memoryless}\vspace{-4mm} 
 \end{figure}
 
 Fig.~\ref{fig:re_tradeoff_memoryless} illustrates the tradeoff between the effective rate \(R=\log_2(M)\,(1-\mathrm{SER})\) in bits per channel use (bpcu) and the average harvested energy \(\mathcal{E}_{h}\). We sweep the minimum EH requirement \(\mathcal{E}_t\) and the constellation size \(M\in\{4,8,16\}\) under a fixed transmit power \(P_T=5\) $\mathrm{dBm}$ and noise variance \(\sigma_s^2=-10\) $\mathrm{dBm}$. For $M = 4$ and $M = 8$, we observe that for several EH constraints, the system achieves nearly perfect ID accuracy, enabling reliable decoding while simultaneously satisfying the EH requirements. However, as $M$ increases, decoding reliability decreases due to higher error rates, but the effective rate improves. This tradeoff arises because larger constellations yield higher peak amplitudes (i.e., higher PAPR), which benefits EH due to the rectifier's exponential response, while also providing more information bits per channel use. Overall, increasing $M$ shifts the tradeoff toward higher rates but lower reliability. This effect is mainly due to the high-SNR regime from the relatively large $P_T$, and reducing $P_T$ would alter the tradeoff.

\section{Conclusion}

In this paper, we developed a constellation design framework for nonlinear UR-SWIPT receivers with channel memory. We first proposed a tractable discrete-time rectification model that captures both the nonlinear steady-state diode response and the asymmetric capacitor charging/discharging dynamics. Based on this model, we characterized the resulting state-dependent information channel and showed how rectifier memory reshapes the effective constellation at the decoder. For the \emph{information-based} setting, we derived algorithmic constellation designs that enforce uniformly spaced sampled voltage levels and obtained an average SER characterization by accounting for the stationary distribution of the rectifier state. Numerical results validated the proposed SER approximation and highlighted the role of symbol duration in the R-R tradeoff. For the \emph{energy-constrained} setting, we proposed an AE-based framework that embeds the rectification model as a differentiable channel block, enabling end-to-end optimization of the transmit constellation and detection rule under EH constraints. The learned constellations adapt to the nonlinear rectifier response and develop high-amplitude power symbols when the EH requirement becomes stringent. Overall, the results demonstrate that accounting for rectifier nonlinearity and memory is essential for reliable and energy-efficient UR-SWIPT design. Future work may consider sequence-level AE training, robust designs without instantaneous state information, and multitone or coded modulation schemes for nonlinear rectification channels with memory.
	
	\appendices
	\section{Discrete-Time Memory-Aware Rectifier Model}
	\label{app:memory_model}
	
	During the $n$-th symbol interval, $t\in[(n-1)T_s,nT_s)$, the transmit amplitude is fixed to $A^{(n)}$. Since $v_{\mathrm{in}}(t)$ varies at the carrier frequency whereas $v_L(t)$ varies slowly for sufficiently large $C$, we first average the diode current over one RF period. Specifically, we define
\begin{equation}
	I_{d,\mathrm{avg}}(A,v)
	\triangleq
	\frac{1}{T}\int_0^T
	I_s\left(
	e^{\frac{v_{\mathrm{in}}(t)-v}{\eta V_T}}-1
	\right)\mathrm{d}t,
	\label{eq:Idavg_def}
\end{equation}
where $T=1/f_c$. The slow envelope dynamics over the $n$-th symbol interval can then be approximated
\begin{equation}
	C\frac{\mathrm{d}v_L(t)}{\mathrm{d}t}
	=
	I_{d,\mathrm{avg}}\big(A^{(n)},v_L(t)\big)
	-
	\frac{v_L(t)}{R_L}.
	\label{eq:avg_envelope}
\end{equation}

Let $\bar v_L(A^{(n)})$ denote the steady-state voltage corresponding to $A^{(n)}$, and define the local deviation as
$\delta v(t)\triangleq v_L(t)-\bar v_L(A^{(n)})$. Linearizing the averaged diode current around $\bar v_L(A^{(n)})$ gives
\begin{equation}
	I_{d,\mathrm{avg}}\big(A^{(n)},v_L(t)\big)\!
	\approx\!
	I_{d,\mathrm{avg}}\big(A^{(n)},\bar v_L(A^{(n)})\big)
	-
	g_d(A^{(n)})\delta v(t),
	\label{eq:Idavg_linearized}
\end{equation}
where
\begin{equation}
	g_d(A^{(n)})
	\triangleq
	-
	\left.
	\frac{\partial I_{d,\mathrm{avg}}(A^{(n)},v)}
	{\partial v}
	\right|_{v=\bar v_L(A^{(n)})}
	\ge 0
	\label{eq:gd_def}
\end{equation}
is the incremental diode conductance. At steady-state, the capacitor current is zero, so
$I_{d,\mathrm{avg}}\big(A^{(n)},\bar v_L(A^{(n)})\big)=\bar v_L(A^{(n)})/R_L$.
Substituting this relation and~\eqref{eq:Idavg_linearized} into~\eqref{eq:avg_envelope} yields the first-order relaxation
\begin{equation}
	\frac{\mathrm{d}\delta v(t)}{\mathrm{d}t}
	=
	-\frac{1}{\tau_{\mathrm{eff}}(A^{(n)})}\delta v(t),
	\label{eq:delta_relaxation}
\end{equation}
where\vspace{-1mm}
\begin{equation}
	\tau_{\mathrm{eff}}(A^{(n)})
	=
	\frac{C}
	{g_d(A^{(n)})+\frac{1}{R_L}}.
	\label{eq:tau_eff_general}\vspace{-1mm}
\end{equation}
Solving~\eqref{eq:delta_relaxation} over one symbol interval gives
	\begin{equation} 
	v_L(t) \!=\! \bar v_L(A^{(n)}) + \bigl(v_L((n-1)T_s)-\bar v_L(A^{(n)})\bigr) e^{\left(\! -\frac{t-(n-1)T_s}{\tau_{\mathrm{eff}}(A^{(n)})}\! \right)}. 
	\label{eq:ct_relaxation_solution}
\end{equation}
Sampling~\eqref{eq:ct_relaxation_solution} at $t=nT_s$ gives the discrete-time model in~\eqref{eq:dt_model}.

It remains to specify $\tau_{\mathrm{eff}}(A^{(n)})$ in the charging and discharging regimes. When
$v_L^{(n-1)}<\bar v_L(A^{(n)})$, the diode conducts and the incremental conductance in~\eqref{eq:gd_def} is non-negligible. Since $I_{d,\mathrm{avg}}(A^{(n)},v)$ depends on $v$ through the factor $e^{-v/(\eta V_T)}$, and assuming $\bar v_L(A^{(n)})/R_L\gg I_s$, we obtain
\begin{equation}
	g_d(A^{(n)})
	\approx
	\frac{I_{d,\mathrm{avg}}\big(A^{(n)},\bar v_L(A^{(n)})\big)}
	{\eta V_T}
	\approx
	\frac{\bar v_L(A^{(n)})}{\eta V_T R_L}.
	\label{eq:gd_surrogate}
\end{equation}
Substituting~\eqref{eq:gd_surrogate} into~\eqref{eq:tau_eff_general} yields the charging time constant
\begin{equation}
	\tau_{\mathrm{eff}}(A^{(n)})
	\approx
	\frac{R_L C}
	{1+\bar v_L(A^{(n)})/(\eta V_T)}.
	\label{eq:tau_gd_app}
\end{equation}
	Thus, larger rectified voltages correspond to stronger diode conduction and faster charging.
	
Conversely, when $v_L^{(n-1)}\ge \bar v_L(A^{(n)})$, the diode is mostly reverse-biased and its incremental conductance is negligible, \text{i.e.}, $g_d(A^{(n)})\approx0$. The capacitor then discharges mainly through the load resistance, leading to the passive RC time constant\vspace{-1mm}
	\begin{equation}
		\tau_{\mathrm{eff}}(A^{(n)})
		\approx
		R_LC.
		\label{eq:tau_discharge_app}\vspace{-1mm}
	\end{equation}
	To account for the mismatch caused by the above approximations, we introduce fitted parameters $\alpha_f$ and $\beta_f$, calibrated by matching the discrete-time recursion to nonlinear ODE simulations over representative symbol sequences. This gives the piecewise approximation in (\ref{eq:tau_piecewise}).

	\section{Closed-form SER Approximation}
	\label{app:ser_skew_normal}
	
	Starting from \eqref{eq:avg_ser_skew_normal}, we use the affine approximation $d_{\max}(v)\approx d_0+d_1v$ over the high-probability range of the rectifier state. Let
	\begin{equation}
		z=\frac{v-\xi}{\omega},\qquad
		A=\frac{d_0+d_1\xi}{2\sigma_s},\qquad
		B=\frac{d_1\omega}{2\sigma_s},
		\label{eq:app_AB_def}
	\end{equation}
	so that $Q\!\left(\frac{d_{\max}(v)}{2\sigma_s}\right)\approx Q(A+Bz)$. Therefore, the SER follows
	\begin{equation}
		\overline{\mathrm{SER}}
		\approx
		\frac{4(M-1)}{M}
		\int_{-\infty}^{\infty}
		\phi(z)\Phi(\alpha z)\,Q(A+Bz)\,\mathrm{d}z.
		\label{eq:app_avg_ser_standardized}
	\end{equation}
	
	Let $
		I \triangleq \int_{-\infty}^{\infty}
		\phi(z)\Phi(\alpha z)\,Q(A+Bz)\,\mathrm{d}z ,$ and $G,U,W$ be mutually independent standard normal random variables. Using
	$\Phi(\alpha g)=\Pr(U\le \alpha g)$ and
	$Q(A+Bg)=\Pr(W\ge A+Bg)$, we obtain
	\begin{equation}
		\begin{aligned}
			I
			&=
			\int_{-\infty}^{\infty}
			\phi(g)\Pr(U\le \alpha g)\Pr(W\ge A+Bg)\,\mathrm{d}g\\
			&=
			\Pr\!\left(U\le \alpha G,\; W\ge A+BG\right),
		\end{aligned}
		\label{eq:app_I_prob}
	\end{equation}
	where the second equality follows by averaging over $G\sim\mathcal{N}(0,1)$. Equivalently,
	\begin{equation}
		I
		=
		\Pr\!\left(U-\alpha G\le 0,\; BG-W\le -A\right).
		\label{eq:app_I_prob_equiv}
	\end{equation}
	Now define
	\begin{equation}
		X_1
		=
		\frac{U-\alpha G}{\sqrt{1+\alpha^2}},
		\qquad
		X_2
		=
		\frac{BG-W}{\sqrt{1+B^2}}.
		\label{eq:app_X_def}
	\end{equation}
	Since $G$, $U$, and $W$ are independent standard normal variables, $(X_1,X_2)$ is a zero-mean bivariate Gaussian vector with unit variances. Its correlation coefficient is
	\begin{equation}
		\rho
		=
		\mathbb{E}[X_1X_2]
		=
		-\frac{\alpha B}
		{\sqrt{(1+\alpha^2)(1+B^2)}}.
		\label{eq:app_rho_def}
	\end{equation}
	Therefore,
	\begin{equation}
		I
		=
		\Phi_2\!\left(0,\;-\frac{A}{\sqrt{1+B^2}};\rho\right),
		\label{eq:app_I_closed}
	\end{equation}
	and substituting \eqref{eq:app_I_closed} into \eqref{eq:app_avg_ser_standardized} yields \eqref{eq:avg_ser_closed_form_expanded}.

\end{document}